\def\OneCol
\newtheorem{lemma}{Lemma}
\newcommand\numberthis{\addtocounter{equation}{1}\tag{\theequation}}
\def\BibTeX{{\rm B\kern-.05em{\sc i\kern-.025em b}\kern-.08em
    T\kern-.1667em\lower.7ex\hbox{E}\kern-.125emX}}
\newcommand*{\rom}[1]{\expandafter\@slowromancap\romannumeral #1@}
\acrodef{PLS}[PLS]{Physical-layer security}
\acrodef{DMC}[DMC]{discrete memoryless channel}
\acrodef{DMCs}[DMCs]{discrete memoryless channels}
\acrodef{BER}[BER]{bit error rate}
\acrodef{LDPC}[LDPC]{low-density parity-check codes}
\acrodef{LP}[LP]{linear programming}
\acrodef{RV}[RV]{random variable}
\acrodef{RVs}[RVs]{random variables}
\acrodef{IoT}[IoT]{Internet of things}
\acrodef{D2D}[D2D]{device-to-device}
\acrodef{BEC}[BEC]{binary erasure channel}
\acrodef{BEWC}[BEWC]{binary erasure wiretap channel}
\acrodef{BSC}[BSC]{binary symmetric channel} 
\acrodef{BSWC}[BSWC]{binary symmetric wiretap channel}
\begin{document}

\title{Secrecy Coding for the Binary Symmetric Wiretap Channel via Linear Programming}
%
%
%
\author{Ali~Nikkhah, \IEEEmembership{Student Member, IEEE}, Morteza~Shoushtari, \IEEEmembership{Student Member, IEEE}, Bahareh~Akhbari,~\IEEEmembership{Member,~IEEE}, and~Willie K. Harrison,~\IEEEmembership{Senior Member,~IEEE}

\thanks{A. Nikkhah was (at the time of conducting this research) and B. Akhbari is with the Faculty of Electrical Engineering, K. N. Toosi University of Technology, Tehran, 16317-14191, Iran, emails:\{\texttt{ali.nikkhah,akhbari@kntu.ac.ir}\}.\\
M. Shoushtari and W. K. Harrison are with the Department of Electrical and Computer Engineering, Brigham Young University, Provo, UT 84602 USA, emails:\{\texttt{morteza.shoushtari,willie.harrison@byu.edu}\}.\\
A. Nikkhah is currently with the Department of Computer and Information Science, Linköping University, Sweden, email:\{\texttt{ali.nikkhah@liu.se}\}.\\
The contributions of M. Shoushtari and W. K. Harrison were funded by US NSF grant \#1910812.}
}
\maketitle
\begin{abstract}
In this paper, we use a linear programming (LP) optimization approach to evaluate the equivocation for a wiretap channel where the main channel is noiseless, and the wiretap channel is a binary symmetric channel (BSC). Using this technique, we present an analytical limit for the achievable secrecy rate in the finite blocklength regime that is tighter than traditional fundamental limits. We also propose a secrecy coding technique that outperforms random binning codes. When there is one overhead bit, this coding technique is optimum and achieves the analytical limit. For cases with additional bits of overhead, our coding scheme can achieve equivocation rates close to the new limit. Furthermore, we evaluate the patterns of the generator matrix and the parity-check matrix for linear codes and we present binning techniques for both linear and non-linear codes using two different approaches: recursive and non-recursive. To our knowledge, this is the first optimization solution for secrecy coding obtained through linear programming.\\
\end{abstract}
\begin{IEEEkeywords}
Secrecy coding, binary symmetric channel, wiretap channel, secrecy capacity, equivocation rate, binning, finite blocklength, linear programming
\end{IEEEkeywords}

\section{Introduction} \label{Introduction}
Wireless communications are inherently subjected to eavesdropping by unauthorized parties, which causes significant challenges to secure communications. To achieve confidentiality and authentication in wireless communications, traditional secrecy approaches use various cryptographic algorithms. These algorithms have been mainly implemented at higher layers of the network protocol stack.
Despite the fact that the upper layers of the network stack typically ensure secrecy by using key-based encryption algorithms, emerging new technologies such as the \ac{IoT} and~\ac{D2D} communications bring new limitations and challenges including complexity and power consumption, which make it difficult to use traditional cryptosystems in some specific environments. Motivated by these constraints and challenges, creating secure data transmission schemes based on the physical characteristics of wireless channels has garnered a lot of recent attention. 

\ac{PLS} was derived from information-theoretic definitions of secrecy that are rooted in Shannon's theories~\cite{Shannon1948,Shannon1949}. \ac{PLS} has some benefits over traditional cryptosystems, such as fast encoding/decoding procedures, and simple implementation, alongside providing quantifiable and provably unbreakable secrecy. The wiretap channel, which was first developed by Wyner in~\cite{Wyner1975} and later enhanced in~\cite{Csiszar1978, Cheong1978}, has been used to study and evaluate~\ac{PLS} over various communication scenarios. Wyner introduced the notion that secrecy can be provided by the characteristics of the communications channel itself without relying on shared secret keys. 

In the wiretap channel, as illustrated in Fig.~\ref{systemmodel}, there are three major participants: the transmitter (Alice), the legitimate receiver (Bob), and the eavesdropper (Eve). Alice intends to transmit a confidential message to Bob while keeping it secret from Eve. Additionally, Wyner in~\cite{Wyner1975} provided a coset coding scheme for the most basic version of the wiretap channel, where he assumed that the main channel (the channel between Alice and Bob) is noiseless, and the wiretap channel (the channel between Alice and Eve) is noisy. The coding scheme can ensure reliability for Bob and secrecy against Eve. 
This channel model is a specific instance of a more generally-defined degraded wiretap channel model~\cite{Csiszar1978}, wherein Eve observes a noisy version of Bob's observation. The idea of stochastic degradedness relaxes strict degradedness so that Eve's observation need only be noisier than Bob's observation. Many early secrecy results over the wiretap channel required at least stochastic degradedness, and thus Wyner's achievements went inactive since many practical communication scenarios did not satisfy the degraded model requirements. With the advent of wireless communications and mobile cellular networks, however,  this theoretical channel model has attracted more attention from both academia and industry~\cite{ShoushtariITC2021,Wang2019, shoushtariWTS2021, Shoushtari2022}.

The objectives of physical-layer security are to maximize the rate of reliable communication between Alice and Bob while simultaneously achieving secrecy against Eve. To achieve these objectives, Alice applies a \emph{secrecy code}, which converts the confidential message $M$ into an $n$-bit codeword $X^n$. 
The notion of \emph{perfect secrecy} was first introduced by Shannon in~\cite{Shannon1949}. He defined a communication scheme to achieve \emph{perfect secrecy}, if $I(M;X^n)=0$, meaning the codeword $X^n$ does not provide any information about the transmitted confidential message $M$. 
After Shannon defined perfect secrecy, he also proved it was unattainable in practice; therefore, Wyner later suggested the concept of \emph{weak secrecy}, which is achieved if the rate of the information leakage ${I(M;Z^n)}/{n}$ vanishes as ${n \to \infty}$, where $Z^n$ is the noisy observation of $X^n$ made by the eavesdropper through the wiretap channel.  Moreover, Maurer in \cite{Maurer1994}, suggested the notion of \emph{strong secrecy}, meaning $I(M;Z^n)$ approaches zero as $n \to \infty$. 

All of the aforementioned references in the previous paragraph considered the infinite blocklength regime, i.e., where $n$ approaches infinity. However, the effectiveness of secrecy codes in practical applications requires finite blocklength codes and secrecy metrics that make sense in the finite blocklength regime. This prompted the development of more empirical metrics, such as exact equivocation or~\ac{BER}~\cite{shoushtariITW2021, Shirvanimoghaddam2019, Shoushtari2023}. Using the~\ac{BER} as a security guarantee has its limitations, leaving the notion of equivocation as our metric of choice. The equivocation is defined as
\begin{equation}
\label{eq:Delta}
    \Delta = H(M|Z^n).
\end{equation}
It is necessary to understand the maximum possible equivocation under the finite blocklength constraint and to find efficient ways to calculate $\Delta$ for specific coding strategies.

\subsection{Prior work} \label{Coding for secrecy}
Many different secrecy codes have been developed for various versions of the wiretap channel~\cite{Harrison2013, Bloch2015}. The majority of secrecy codes are formed from a structure comparable to Wyner's coset coding approach, which as previously mentioned was first introduced by Wyner in the 1970s~\cite{Wyner1975}. For example, in \cite{Thangaraj2007} authors considered a~\ac{BEWC}, when the main channel is noiseless, and the wiretap channel is a~\ac{BEC}, and they presented a coding scheme based on~\ac{LDPC} that achieved the secrecy capacity by using the threshold properties of the codes. They also assumed a case where the main channel is noiseless and the wiretap channel is a~\ac{BSC} and they provided a coding scheme by using codes that have good error-detecting capabilities. Enhancements to~\ac{LDPC} coding schemes have resulted in stronger results over the~\ac{BEWC} as in~\cite{Subramanian2011}. In~\cite{Yang2016}, authors proposed a rateless coding scheme that also used polar codes to guarantee both security and reliability over the~\ac{BEWC}. Also, the performance of polar and Reed-Muller codes over the~\ac{BEWC} have been discussed in~\cite{Shakiba2021}, even in the case of finite blocklengths. 

The general idea of utilizing polar codes to achieve weak and strong secrecy for the~\ac{BSWC} with a degraded eavesdropper, has been presented in~\cite{Anderson2010,Hof2010,Mahdavifar2011, chen2010}. The vast majority of results in secrecy coding attempt to achieve either weak or strong secrecy, but the works of~\cite{Hassan2013} and~\cite{Pfister2017} began the search for best codes with finite blocklengths.
Reference~\cite{Hassan2013} addressed the channel model with a noiseless main channel and a binary symmetric wiretap channel, while~\cite{Pfister2017} opted for a binary erasure channel as the wiretap channel. Both approaches focused on exact equivocation calculations, and the techniques change with the choice of the channel model. Reference \cite{ranashort2023} further provides a new coding approach for the Gaussian wiretap channel. 

Recently, several additional works have focused on quantifying equivocation and finding the characteristics of the best short wiretap codes~\cite{Pfister2017, Harrison2018, Dong2020}, but only a small number of studies have been conducted on non-linear coding approaches. For example, in~\cite{Mohamed2017} authors presented a  scrambling module that combines linear and non-linear structures based on polar codes for the~\ac{BSWC}. Also, a non-linear cryptosystem based on structured~\ac{LDPC} codes with a small key size, which is able to provide secrecy and reliability at the same time, was proposed in~\cite{Stuart2017}.

There is a growing interest in comprehending the achievable rates and design principles of secrecy coding under finite blocklength constraints. Several notable contributions have made significant progress in this research area. In~\cite{Wei2019}, the authors established nonasymptotic achievability and converse bounds on the maximum secret communication rate over a wiretap channel. Their results demonstrated that these novel bounds are uniformly tighter than existing ones, leading to the tightest bounds on the second-order coding rate for both discrete memoryless and Gaussian wiretap channels. Furthermore, reference~\cite{Polyanskiy2010} investigated the maximal channel coding rate attainable at a given blocklength and error probability. They introduced new bounds on achievability and a converse that tightly encapsulates the fundamental limits, even for relatively short blocklengths. Another intriguing aspect explored in~\cite{Yang2017} is the maximum secrecy rate for a semi-deterministic wiretap channel, where the channel between the transmitter and the legitimate receiver is deterministic while that between the transmitter and the eavesdropper is a discrete memoryless channel.

The problem of identifying the best codes over various channel models remains unsolved, and although the gap between infinite blocklength achievable secrecy rates and achievable rates for finite codes has been addressed~\cite{Pfister2017}, precise limits of performance are still unknown for many finite blocklength codes.
These studies collectively contribute valuable insights into understanding and optimizing secrecy coding performance under finite blocklength constraints.

\subsection{Our Contributions} 
\label{Our contribution}
In this paper, we consider a binary symmetric wiretap channel model, where the main channel is noiseless, and the wiretap channel is a~\ac{BSC}. We evaluate and quantify the equivocation of the~\ac{BSWC} in the finite blocklength regime and obtain a new bound on achievable rates using~\ac{LP} techniques. Moreover, we present linear and non-linear coding schemes that achieve or approach this new fundamental limit. To the best of our knowledge, this is the first application of LP to achievable rates and coding over the wiretap channel. 

We summarize the main contribution of this work as follows:
\begin{enumerate}
    \item First, we develop the equivocation calculation for general binning techniques over the~\ac{BSWC}. Then, to optimize the equivocation, we formulate the calculation as a linear programming problem by defining the objective function and constraints. Our investigations lead to the discovery of a novel \textit{LP-derived limit} on the equivocation rate for finite blocklength through the solution of the~\ac{LP} problem. Remarkably, our numerical results show that our LP-derived limit improves on fundamental infinite blocklength performance bounds by incorporating a blocklength constraint. 
    \item We design a secrecy coding scheme and define its main components. Based on the results of the analysis, we demonstrate that our proposed coding scheme can achieve the new limit in the case where the size of the overhead bits $(n-k)$ is equal to one, and in other cases can get close to the new limit. In other words, our proposed coding scheme has better secrecy performance compared to a large number (perhaps all) of binning codes in the small blocklength regime.
    \item Finally, we provide recursive and non-recursive techniques to construct bins of the proposed secrecy coding.
\end{enumerate}

\subsection{Organization of the Paper} 
\label{Organization}
The rest of the paper is organized as follows. In Section \ref{System Model and Preliminaries}, we review the system model and the bin coding that we consider in this work and the basic coset coding technique. 
Section \ref{Equivocation calculation} explores the equivocation for linear block codes and evaluates a new search method for non-linear codes. In Section \ref{Optimization}, we describe the utilization of the~\ac{LP} approach to find new achievable limits. Section \ref{NA} presents our proposed coding schemes (Ni codes). In Section \ref{The Answer of optimization for different cases}, we explore the utilization of the~\ac{LP} approach for some numerical examples and then we present the results. In Section \ref{Matrices to generate the code}, we explain the construction of the generator and the parity-check matrices for our coding scheme by showing patterns and numerical examples. In Section \ref{The search for $AC+B$}, we unify our two coding approaches and present an integrated process to produce the bins. Section \ref{Conclusion}, provides concluding remarks and ideas for future work.

\subsection{Notation} \label{Notation}
Throughout this paper, all random variables (RVs) are denoted by capital letters (e.g., $A$), while small letters (e.g., $a$) represent their corresponding realizations. The calligraphic forms (e.g., $\mathcal{A}$) denote sets and $|\cdot|$ denotes the cardinality of a set (e.g., $|\mathcal{A}|$). 
Capital and small bold letters (e.g., $\mathbf{A}$ and $\mathbf{a}$) denote matrices and vectors, respectively, and superscripts indicate the length of vectors when used.  All vectors are row vectors, and all codes are binary. 
All logarithms are of base two.

\section{System Model and Preliminaries}  
\label{System Model and Preliminaries}
In this work, before diving into the details of our novel non-linear secrecy code, its construction, and the assessment of its secrecy performance, we first lay the groundwork by explaining the system model under consideration. Following this, we present our binning encoder and decoder and offer a brief overview of linear coset codes' encoding and decoding processes.
\subsection{System Model} \label{System Model}
Let us consider the wiretap channel system model depicted in Fig. \ref{systemmodel}, where Alice intends to transmit the confidential message $M$, which is assumed to be chosen uniformly at random from $\mathcal{M}=\{1,2,...,2^k\}$, to Bob through the main channel, and wishes to keep it secret from Eve, who listens to the message via her own channel. Both the main and wiretap channels are discrete memoryless channels, and in particular, the wiretap channel is stochastically degraded with respect to the main channel. To keep the message secure, Alice encodes $M$ into an $n$-bit binary codeword $X^n$. Bob receives $Y^n$ through the main channel and uses a decoding function to map $Y^n$ into the estimated version of the confidential message $\hat{M}$.
Eve observes $Z^n$ through the wiretap channel.
In this work, the main channel is noiseless, meaning $Y^n=X^n$, and the wiretap channel is a ~\ac{BSC}, with crossover probability $p$.

\ifdefined\OneCol
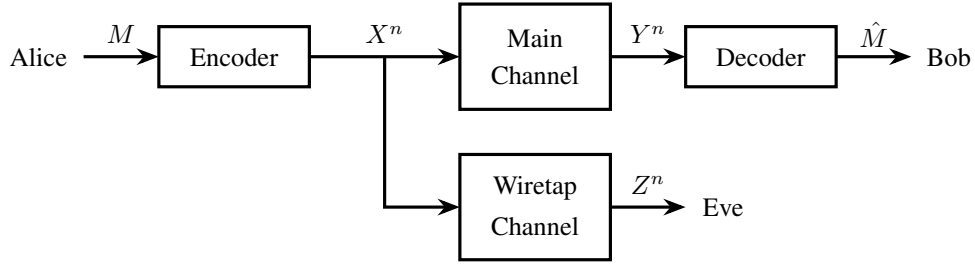
\begin{figure}
  \centering
    \resizebox{0.8\textwidth}{!}{%
\begin{tikzpicture}
\draw[-Stealth, very thick] (0,0) -- (1,0) node[xshift=-1.6cm] {\small Alice} node[xshift=-0.5cm, yshift=0.3cm] {\small $M$};
\draw[very thick] (1,-0.4) rectangle (3,0.4) node[xshift=-1cm,yshift=-0.4cm] {\small Encoder};
\draw[-Stealth, very thick] (3,0) -- (5,0) node[xshift=-1cm, yshift=0.3cm] {\small $X^n$};
\draw[-Stealth, very thick] (4,0) -- (4,-2) -- (5,-2);
\draw[very thick] (5,-0.7) rectangle (7,0.7) node[xshift=-1cm,yshift=-0.45cm] {\small Main} node[xshift=-1cm,yshift=-0.95cm] {\small Channel} ;
\draw[very thick] (5,-0.7-2) rectangle (7,0.7-2) node[xshift=-1cm,yshift=-0.45cm] {\small Wiretap} node[xshift=-1cm,yshift=-0.95cm] {\small Channel} ;
\draw[-Stealth, very thick] (7,0)-- (8,0) node[xshift=-0.5cm, yshift=0.3cm] {\small $Y^n$};
\draw[very thick] (1+7,-0.4) rectangle (3+7,0.4) node[xshift=-1cm,yshift=-0.4cm] {\small Decoder}  ;
\draw[-Stealth, very thick] (7,-2)-- (8,-2) node[xshift=0.5cm] {\small Eve} node[xshift=-0.5cm, yshift=0.3cm] {\small $Z^n$};
\draw[-Stealth, very thick] (10,0)-- (11,0) node[xshift=0.5cm] {\small Bob} node[xshift=-0.5cm,yshift=0.3cm] {\small $\hat{M}$};
\end{tikzpicture}
}
  \caption{Wiretap channel model.}
  \label{systemmodel}
\end{figure}
\fi

\ifdefined\TwoCol
\begin{figure}
  \centering
  \resizebox{0.49\textwidth}{!}{%
\begin{tikzpicture}
\draw[-Stealth, very thick] (0,0) -- (1,0) node[xshift=-1.6cm] {\small Alice} node[xshift=-0.5cm, yshift=0.3cm] {\small $M$};
\draw[very thick] (1,-0.4) rectangle (3,0.4) node[xshift=-1cm,yshift=-0.4cm] {\small Encoder};
\draw[-Stealth, very thick] (3,0) -- (5,0) node[xshift=-1cm, yshift=0.3cm] {\small $X^n$};
\draw[-Stealth, very thick] (4,0) -- (4,-2) -- (5,-2);
\draw[very thick] (5,-0.7) rectangle (7,0.7) node[xshift=-1cm,yshift=-0.45cm] {\small Main} node[xshift=-1cm,yshift=-0.95cm] {\small Channel} ;
\draw[very thick] (5,-0.7-2) rectangle (7,0.7-2) node[xshift=-1cm,yshift=-0.45cm] {\small Wiretap} node[xshift=-1cm,yshift=-0.95cm] {\small Channel} ;
\draw[-Stealth, very thick] (7,0)-- (8,0) node[xshift=-0.5cm, yshift=0.3cm] {\small $Y^n$};
\draw[very thick] (1+7,-0.4) rectangle (3+7,0.4) node[xshift=-1cm,yshift=-0.4cm] {\small Decoder}  ;
\draw[-Stealth, very thick] (7,-2)-- (8,-2) node[xshift=0.5cm] {\small Eve} node[xshift=-0.5cm, yshift=0.3cm] {\small $Z^n$};
\draw[-Stealth, very thick] (10,0)-- (11,0) node[xshift=0.5cm] {\small Bob} node[xshift=-0.5cm,yshift=0.3cm] {\small $\hat{M}$};
\end{tikzpicture}
}
  \caption{Wiretap channel model.}
  \label{systemmodel}
\end{figure}
\fi






The design of the coding scheme in the wiretap channel has two objectives: the first one is to maximize the information rate of reliable communication between Alice and Bob, which is called the \emph{reliability constraint}; and the second objective is to ensure that Eve gets as little information as possible about $M$, which is called the \emph{secrecy constraint}. The reliability constraint is measured in terms of the average probability of error
\begin{equation} \label{reliability}
P_e=\frac{1}{|\mathcal{M}|} \sum\limits_{m \in \mathcal{M}} \Pr(m \neq \hat{m}|m).
\end{equation}
Since we assume the main channel is noiseless, $P_e$ is equal to zero. Assuming a noiseless main channel allows us to focus only on secrecy in the optimization, although future efforts consider an error-prone main channel. A similar approach is taken in other secrecy coding works \cite{Harrison2013}, \cite{Mahdavifar2011}, \cite{Pfister2017}, \cite{Suresh2010}, \cite{Harrison2018} and \cite{Jensen2019}.

The equivocation on the eavesdropper's side is used to measure the secrecy constraint and is defined as $\Delta$ in (\ref{eq:Delta}).
The equivocation rate is $R_e=\frac{1}{n}\Delta$.

\subsection{Preliminaries}
In a wiretap channel with a noiseless main channel, the objective is to formulate optimal codes for any given $n$ and $k$ such that $\Delta \rightarrow H(M)$. It is important to highlight that evaluating the security performance for small to moderate blocklengths poses significant challenges.
The codes discussed in this paper predominantly fall under the non-linear category. As a result, we typically employ the term \textit{bin} instead of \textit{coset}. However, when the term \textit{coset} is used, it specifically refers to a linear code.

The encoder and decoder proposed in this paper can be described as follows.
Given a message selected from the set $\mathcal{M} = \{1,2,...,2^k\}$, one can utilize a fundamental binning encoder, which isn't strictly derived from linear codes. Specifically, think of the sets of codewords as bins: $\mathcal{B}_1, \mathcal{B}_2, \ldots, \mathcal{B}_{2^k}$. Each set of codewords, denoted $\mathcal{B}_i$, comprises $2^l$ codewords linked to the message $m_i$, where $i$ ranges from 1 to $2^k$. Using a stochastic encoding approach, the message is encoded into a random codeword from the associated set. When decoding, the received codeword is examined against the $2^k$ bins. Should the codeword correspond to the $i$th bin, $\mathcal{B}_i$, the deterministic decoder identifies it with the $i$th message, $m_i$.

Nevertheless, some of our codes are linear, and we will present the generator and parity-check matrices for them. Hence, we would have a review of the wiretap coset coding.

The fundamental concept behind Wyner's coding scheme~\cite{Wyner1975}, is to inject randomness into the encoding process. This randomness adds confusion to the eavesdropper's observation by providing multiple codewords for each message, just as in the general binning scheme mentioned above.
Let $\mathcal{C}$ be an $(n,k)$ linear block code, with cosets $\mathcal{C}_1,\mathcal{C}_2,\ldots,\mathcal{C}_{2^k}$, where $\mathcal{C}_1 = \mathcal{C}$ and $l=n-k$. Also, let $\mathbf{G}'$ be the $(l)\times n$ generator matrix of $\mathcal{C}$ and $\mathbf{H}'$ be the $k \times n$ parity-check matrix of $\mathcal{C}$. The encoder function encodes the message $m$ by choosing a codeword from $2^l$ different codewords, uniformly at random. This operation can be done using an auxiliary message $M' \in \mathbb{F}_2^{(n-k)}$, where $M'$ carries no real information and is chosen uniformly at random. Then $\mathbf{G}$, which is created by vertically combining $\mathbf{G}''$ and $\mathbf{G}'$, defined by 
\begin{equation}
  \mathbf{G} = \left[ \begin{matrix} \mathbf{G}'' \\ \mathbf{G}' \end{matrix} \right],
\end{equation}
is required for the encoding process. The matrix $\mathbf{G}''$ is comprised of $k$ linearly independent vectors in $\mathbb{F}_2^{n}$, but not in  $\mathcal{C}$. Using this construction approach, $\mathbf{G}$ will be a square $(n \times n)$ full-rank matrix. Then codeword $X^n$ is calculated as follows
\begin{equation}
  \mathbf{x}^n = \left[ \begin{matrix} \mathbf{m} & \mathbf{m}' \end{matrix}\right] \left[ \begin{matrix} \mathbf{G}'' \\ \mathbf{G}' \end{matrix} \right] = \mathbf{m} \mathbf{G}'' + \mathbf{m}'\mathbf{G}'.
\end{equation}
Clearly, $\mathbf{m}\mathbf{G}''$ chooses the coset, and $\mathbf{m}'\mathbf{G}'$ chooses a specific codeword from the coset uniformly at random.

Since we assumed that the main channel is noiseless on this \ac{BSWC} model, Bob can perform the decoding process by simply mapping the codeword back to the appropriate message. First, the decoder function computes a syndrome as follows
\ifdefined\OneCol
\begin{equation}
\mathbf{\boldsymbol{\sigma}} = \mathbf{y}^n(\mathbf{H}')^T = \mathbf{x}^n(\mathbf{H}')^T = \mathbf{m} \mathbf{G}''(\mathbf{H}')^T + \mathbf{m}'\mathbf{G}'(\mathbf{H}')^T = \mathbf{m} \mathbf{G}''(\mathbf{H}')^T.
\end{equation}
\fi
\ifdefined\TwoCol
\begin{align*}
\mathbf{\boldsymbol{\sigma}} &= \mathbf{y}^n(\mathbf{H}')^T = \mathbf{x}^n(\mathbf{H}')^T = \mathbf{m} \mathbf{G}''(\mathbf{H}')^T + \mathbf{m}'\mathbf{G}'(\mathbf{H}')^T\\
&= \mathbf{m} \mathbf{G}''(\mathbf{H}')^T. \numberthis 
\end{align*}
\fi
It is worth mentioning that $\mathbf{G}''(\mathbf{H}')^T$ forms a bijective mapping between $\boldsymbol{\sigma}$ and $\mathbf{m}$. If $\mathbf{G}''$ and $\mathbf{H}'$ are chosen so that $\mathbf{G}''(\mathbf{H}')^T$, is equal to the $(k \times k)$ identity matrix, then $\boldsymbol{\sigma} = \mathbf{m}$. Otherwise, the mapping must be inverted to complete the decoder~\cite{Pfister2017}.

\section{Equivocation Calculation} \label{Equivocation calculation}
We take our equivocation calculation over the~\ac{BSWC} as an extension to the process described in~\cite{Pfister2018}.  
Therein, the author calculated the equivocation for finite blocklength linear codes over the~\ac{BSWC}. In this paper, we expand on this analysis to include non-linear codes as well as linear codes. 

Consider $X^n$ transmitted over a BSC($p$) to produce $Z^n$. Then set
\begin{equation} \label{Gamma_f}
\Gamma (\eta)= P(\text{a specific $\eta$-error pattern})= p^{\eta} \times q^{n-\eta},
\end{equation}
where $q=1-p$. Thus, the probability of $x^n$ being transmitted, given that Eve observes $z^n$ through her channel, is equal to
\begin{equation} \label{Hamming_dis_pr}
P(x^n|z^n)=\Gamma (d_H (x^n,z^n)),
\end{equation}
where $d_H (.,.)$ signifies the Hamming distance between the two parameters.
Furthermore, the probability that $z^n$ belongs to the bin $\mathcal{B}_i$, which corresponds to the confidential message $m_i$, is defined as
\begin{equation} \label{Coset_pr}
P_{\mathcal{B}_i}=P(\mathcal{B}_i|z^n)=\sum_{x^n \in \mathcal{B}_i} P(x^n|z^n).
\end{equation}
The equivocation of the eavesdropper is then defined as
\begin{equation}  \label{Conditioned_Ent}
H(M|Z^n)=- \sum_{i=1}^{2^k} P_{\mathcal{B}_i} \text{log} P_{\mathcal{B}_i}. 
\end{equation}
Because $M$, $X^n$, and the $Z^n$, are distributed uniformly, the average equivocation is
\begin{equation} \label{equivovation_t} 
H(M|Z^n)=\frac{1}{2^n}\sum\limits_{z^n\in Z^n}H(M|Z^n=z^n). \numberthis 
\end{equation}

From another perspective, there can be a $2^k \times (n+1)$ matrix $\mathbf{R}$ for a specific observation $z^n$ that illustrates the relationship of $z^n$ with each codeword in the bins. The element in position $(i,j)$ in the matrix $\mathbf{R}$, gives the number of codewords in bin $\mathcal{B}_i$, that have Hamming distance of $j-1$ with $z^n$. 
Now consider an $(n+1)\times 1$ matrix $\mathbf{\Gamma}$, where each element of the matrix $\mathbf{\Gamma}$ ($\gamma_{i}$) equals $\Gamma(i-1)$.

The matrix $\mathbf{\Pi}$ is then written as follows
\begin{equation}
\label{RGAMA}
\mathbf{\Pi}=\mathbf{R}\mathbf{\Gamma},
\end{equation}
and $\mathbf{\Pi}^{T}=[P_{\mathcal{B}_1}, P_{\mathcal{B}_2}, \hdots,P_{\mathcal{B}_{2^k}}]$. For linear wiretap codes, if a ${z^{'}}^n$ is added to the codewords of the bins, the ordering of the bins will change, but the grouping of codewords in bins will remain the same. Thus, the entropy of the matrix $\mathbf{\Pi}$ will remain the same. Hence, the total equivocation is the same as a conditioned equivocation for each of the eavesdropper's observations. 

Whereas, for the non-linear coding scheme, each $z^n$ can have a different $\mathbf{R}$, so each $\mathbf{\Pi}$ calculation can result in a different entropy. The ultimate goal is to maximize total equivocation in (\ref{equivovation_t}), which equals the average entropy of all $2^n$ different $z^n$ values. Our first contribution is to evaluate how a single entropy of an arbitrary $z^n$ can be maximized, in general. In other words, how should $\mathbf{R}$ be designed to maximize the entropy obtained from the corresponding $\mathbf{\Pi}$. 
The difficult part is choosing $2^k$ rows for $\mathbf{R}$ from all the possible rows, resulting in maximization over $\mathbf{\Pi}$.
We consider a possible row as a vector of $(n+1)$ non-negative integers when the summation of the number of codewords in each bin is $e=2^l$. Thus, the number of different possible rows is equal to
\begin{equation} \label{simplenumber}
N=\binom{e+n}{e}.
\end{equation}
Another equivalent calculation of $N$ is presented in Appendix A. Notice that all three formulas of (\ref{simplenumber}), (\ref{appnumber}), and (\ref{appnumber2}) result in the same answer.
\section{Optimization}
\label{Optimization}
In this section, we define a method for selecting $2^k$ possible rows of $\mathbf{R}$ to maximize $\mathbf{\Pi}$ based on linear programming (LP). The \ac{LP} optimization problem can be formulated as  
\begin{align*}  
\text{maximize} \ \mathbf{f}^\intercal \mathbf{x} \ & \text{with the respect to} \ \mathbf{x}\\  
\ & \text{s.t.} \ \mathbf{Ax} = \mathbf{b}
\\  & \text{and} \ \mathbf{0} \leq \mathbf{x}  . \numberthis \label{optimization}
\end{align*}


In this optimization problem, we are looking to maximize the objective function $\mathbf{f}^\intercal \mathbf{x}$, over the feasible set.
The ${N \times 1}$ vector $\mathbf{f}$ contains the values of $- P_{\mathcal{B}_i} \text{log}(P_{\mathcal{B}_i})$, when $P_{\mathcal{B}_i}=\mathbf{r}_i \mathbf{\Gamma}$, in which $\mathbf{r}_i \in (\mathbb{Z}^+)^{n+1}$ is the $i$th possible row of the total $N$ rows of $\mathbf{R}$. The possible rows are arranged in an arbitrary manner, and this arrangement holds throughout the entire problem. The ${N \times 1}$ vector $\mathbf{x}$ is the number of times each possible row could be selected. The $i$th column of the matrix $\mathbf{A}_{(n+1) \times N}$ corresponds to the $i$th possible row, and the $i$th element of $\mathbf{b}_{(n+1) \times 1}$ is equal to $\binom{n}{i-1}$.
The equality constraint in (\ref{optimization}) ensures that the summation of the elements in the $i$th column of $\mathbf{R}$ is, as it should be, $\binom{n}{i-1}$. The lower bound of $\mathbf{0}_{N \times 1}$ indicates that no negative quantity can exist in each row. Given the equality constraint, an upper bound is not required.  

The objective function in (\ref{optimization}) is linear and its constraint forms a polyhedron. Hence, this is an LP problem; thus, convex \cite{Boyd2004}, and the optimum answer will occur at one of the polyhedron's corners, which will be in integer values.
By solving this LP problem, for any form of $(l,k)$ and also for a specific $p$, we are able to obtain $\mathbf{x^*}$, which is the answer of the optimization in (\ref{optimization}) that indicates how many potential rows should be selected to maximize the entropy in (\ref{Conditioned_Ent}).

\begin{lemma}
\label{lem:1}
For a specific form of $(l,k)$ and a given $p$, the value of optimum entropy for a conditioned equivocation, i.e., (\ref{Conditioned_Ent}), is an upper bound for the total equivocation, i.e., (\ref{equivovation_t}).
\end{lemma}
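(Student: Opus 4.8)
The plan is to establish the stronger pointwise statement that the LP optimum $H^\star := \mathbf{f}^\intercal \mathbf{x}^\star$ dominates the single-observation equivocation $H(M|Z^n=z^n)$ for \emph{every} $z^n$, and then average over $z^n$. First I would fix an arbitrary observation $z^n \in \mathbb{F}_2^n$ and inspect the associated matrix $\mathbf{R}(z^n)$ defined just above (\ref{RGAMA}). Because the decoder requires the bins $\mathcal{B}_1,\dots,\mathcal{B}_{2^k}$ to be disjoint and, having total size $2^k\cdot 2^l = 2^n$, they exhaust all of $\mathbb{F}_2^n$, the number of codewords at Hamming distance $j-1$ from $z^n$, counted across all bins, is exactly $\binom{n}{j-1}$. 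Equivalently, the $j$th column of $\mathbf{R}(z^n)$ sums to $\binom{n}{j-1}=b_j$. Moreover each of the $2^k$ rows of $\mathbf{R}(z^n)$ is a nonnegative integer vector whose entries sum to $e=2^l$, hence is one of the $N$ ``possible rows'' counted in (\ref{simplenumber}).

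Next I would translate $\mathbf{R}(z^n)$ into a feasible point of (\ref{optimization}). Let $x_i^{(z^n)}$ be the number of rows of $\mathbf{R}(z^n)$ that equal the $i$th possible row; then $\mathbf{x}^{(z^n)}\ge \mathbf{0}$ is integral, and the column-sum identity above says precisely $\mathbf{A}\mathbf{x}^{(z^n)}=\mathbf{b}$. Since the objective vector $\mathbf{f}$ was built entry-by-entry through the same map $\mathbf{r}\mapsto -(\mathbf{r}\mathbf{\Gamma})\log(\mathbf{r}\mathbf{\Gamma})$ that defines $P_{\mathcal{B}_i}$ in (\ref{Coset_pr})--(\ref{RGAMA}) and the entropy in (\ref{Conditioned_Ent}), evaluating the objective at this point reproduces the per-observation equivocation exactly:
\begin{equation*}
\mathbf{f}^\intercal \mathbf{x}^{(z^n)} \;=\; -\sum_{i=1}^{2^k} P_{\mathcal{B}_i}\log P_{\mathcal{B}_i} \;=\; H(M|Z^n=z^n).
\end{equation*}
As $\mathbf{x}^{(z^n)}$ lies in the feasible region and $\mathbf{x}^\star$ is optimal, this gives $H(M|Z^n=z^n)\le H^\star$ for every $z^n$. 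Averaging over the $2^n$ equiprobable observations and invoking (\ref{equivovation_t}) then yields $H(M|Z^n)=\frac{1}{2^n}\sum_{z^n}H(M|Z^n=z^n)\le H^\star$, which is the assertion of the lemma.

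The only point that genuinely requires care is the middle step: showing the LP is a bona fide relaxation, i.e., that \emph{every} $\mathbf{R}(z^n)$ coming from an actual binning maps to a feasible LP point with matching objective value. This rests entirely on the partition property of the bins, which makes the column sums equal to the fixed binomials $b_j$ regardless of $z^n$ (so the single equality constraint $\mathbf{A}\mathbf{x}=\mathbf{b}$ is the right one), together with the row-wise definition of $\mathbf{f}$. I do not expect further obstacles: no optimality, integrality, or corner-point property of $\mathbf{x}^\star$ is needed for the upper bound — only that $\mathbf{x}^\star$ is optimal over a set provably containing all realizable per-observation configurations. (The integrality/corner remark preceding the lemma is what later makes this bound \emph{attainable} in structured cases, but it plays no role in the inequality itself.)
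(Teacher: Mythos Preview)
Your proof is correct and follows the same approach as the paper: bound each per-observation term $H(M|Z^n=z^n)$ by the LP optimum and then average via (\ref{equivovation_t}). Your version is considerably more explicit than the paper's---in particular, you spell out why every realizable $\mathbf{R}(z^n)$ yields a feasible point of (\ref{optimization}) with matching objective value, a step the paper takes for granted.
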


\begin{proof}
As proved by the optimization problem, this optimal value is the maximum amount of message entropy conditioning a $z^n$. In order for the total equivocation to be maximized, the distribution of codewords in the code table should be such that all $2^n$ conditioned equivocations have the optimum value. So, by (\ref{equivovation_t}), the total equivocation would have the optimum value regardless of whether the code table has such a codeword distribution or not. This is an upper bound on the total equivocation for a specific form of $(l,k)$ and a specific value of $p$.
\end{proof}
Consider the upper bound on equivocation derived from the technique described in Lemma~\ref{lem:1}, and compare the result to that obtained by Wyner's secrecy capacity~\cite{Wyner1975}. Note that our result provides a tighter bound than Wyner, as we consider a finite blocklength constraint in our optimization. In the remainder of the paper, we refer to our upper bound as the LP-derived limit. we refer to Wyner's bound as the infinite blocklength limit. Naturally, the LP-derived limit approaches the finite blocklength limit as $n\to\infty$.


It should be noted that when an optimal limit is asserted based on the reasoning presented, it is a confirmed analytical limit for a form when all the $2^n$ codewords are distributed equally into the $2^k$ bins. In other words, this LP-derived limit is not a suggested limit for other structures that employ non-uniform distributions in the code table or when all $2^n$ codewords are not used during the encoding process. However, in this wiretap channel model, when the main channel is noiseless, the distribution is uniform, and all $2^n$ codewords will be used equally. The solutions to this optimization problem for different forms are be discussed in Section~\ref{The Answer of optimization for different cases}.
\section{Ni Coding and Its Two Essential Parts} \label{NA}
In this section, we define and explain the \emph{Ni} coding technique and its two main construction algorithms, to achieve the maximum equivocation in the~\ac{BSWC} channel model introduced in Section~\ref{System Model}. In the Ni code, \emph{forms} are created by different combinations of the length of overhead bits and information bits, $l$ and $k$, respectively. A specific form will be denoted by $(l, k)$. Furthermore, a \emph{case} contains all forms with the same number of overhead bits $(l)$. The first form (FF) of a case refers to the form with $k=1$.
Utilizing the Ni coding approach can produce all codewords for each form of $(l,k)$. The variable $\mathcal{D}$ represents a code table that the Ni code produces, and $\mathcal{D}_{l,k}$ is used to specify a table for a specific $(l, k)$ form with $2^k$ bins and $2^l$ codewords per bin. Columns of the table $\mathcal{D}_{l,k}$ are the bins $\mathcal{B}_1,\mathcal{B}_2,\ldots,\mathcal{B}_{2^k}$. Our proposed Ni code has the advantage of allowing $\mathcal{D}$ of any different form $(l,k)$ to be obtained from $\mathcal{D}_{l-1,k}$ or $\mathcal{D}_{l,k-1}$. As a result, more complex and higher forms can be generated from simpler ones. In Section \ref{The Answer of optimization for different cases}, we explain and explore in detail the secrecy performance of our coding scheme, and we demonstrate that they outperform a myriad of randomly generated codes in terms of equivocation rate. Next, we will talk about interconnections of different forms and cases. 

The recursive alternate single bit adding (RASBA) algorithm, which will be explained in more detail in Section \ref{RASBA}, allows the system to keep the number of overhead bits ($l$) constant while generating $\mathcal{D}$ of the forms with a higher value for $k$. To do so, as illustrated by the horizontal lines in Fig. \ref{Schememap}, moving on the green and blue lines increases the value of $n$ and $k$, while keeping the number of overhead bits constant. It should be noted that the green line represents when the system does not use any coding algorithm, i.e., $l=0$.

The recursive alternate half-bits adding (RAHBA) algorithm will be detailed in Section~\ref{RAHBA}. This algorithm enables the system to produce any $\mathcal{D}$ of the forms with higher $l$ and $n$ values while fixing the number of data bits at $k$. The red and the (solid) black lines in Fig. \ref{Schememap} can be used by the RAHBA approach to increase $l$, and consequently $n$, while keeping $k$ constant. It is worth mentioning that the first vertical line is separated by its color (red) from the other vertical lines (black) because by moving on the red line, the $\mathcal{D}$ of the FF of the higher cases can be obtained by the $\mathcal{D}$ of the FF of the previous cases. This is a part of the standard path described in Section \ref{The search for $AC+B$}, where the proposed method is formulated.

Moving along each of the two groups of lines causes an increment in $n$. Each intersection of the lines is a form, and the intersections accentuated with red dots are the FF of the various cases. All forms belonging to the same horizontal line are of the same case. 

\begin{figure}[ht]
\centerline{\begin{tikzpicture}

\node [xshift=-1.5cm, yshift=3.3cm,rotate=90] {To higher overhead bits by RAHBA $\rightarrow$};
\node [xshift=2.9cm, yshift=-1.5cm] {To higher data bits by RASBA $\rightarrow$};

\draw[-Stealth,red, very thick] (0,0) -- (0,7); \node [xshift=0cm, yshift=-0.6cm, rotate=90] {$k=1$};

\draw[densely dotted, very thick] (1,0) -- (1,1) node [xshift=0cm, yshift=-1.6cm, rotate=90] {$k=2$};
\draw[-Stealth, very thick] (1,1) -- (1,7);

\draw[densely dotted, very thick] (2,0) -- (2,1) node [xshift=0cm, yshift=-1.6cm, rotate=90] {$k=3$};
\draw[-Stealth, very thick] (2,1) -- (2,7);

\draw[densely dotted, very thick] (3,0) -- (3,1) node [xshift=0cm, yshift=-1.6cm, rotate=90] {$k=4$};
\draw[-Stealth, very thick] (3,1) -- (3,7);

\draw[densely dotted, very thick] (4,0) -- (4,1) node [xshift=0cm, yshift=-1.6cm, rotate=90] {$k=5$};
\draw[-Stealth, very thick] (4,1) -- (4,7);

\draw[densely dotted, very thick] (5,0) -- (5,1) node [xshift=0cm, yshift=-1.6cm] {$\hdots$};
\draw[-Stealth, very thick] (5,1) -- (5,7);

\draw[-Stealth,green, very thick] (0,0) -- (6,0); \node [xshift=-0.6cm, yshift=0cm] {$l=0$};

\draw[-Stealth,blue, very thick] (0,1) -- (6,1);
\node [xshift=-0.6cm, yshift=1cm] {$l=1$};

\draw[-Stealth,blue, very thick] (0,2) -- (6,2);
\node [xshift=-0.6cm, yshift=2cm] {$l=2$};

\draw[-Stealth,blue, very thick] (0,3) -- (6,3);
\node [xshift=-0.6cm, yshift=3cm] {$l=3$};

\draw[-Stealth,blue, very thick] (0,4) -- (6,4);
\node [xshift=-0.6cm, yshift=4cm] {$l=4$};

\draw[-Stealth,blue, very thick] (0,5) -- (6,5);
\node[xshift=-0.6cm, yshift=5cm] {$l=5$};

\draw[-Stealth,blue, very thick] (0,6) -- (6,6); \node [xshift=-0.6cm, yshift=6cm] {$\vdots$};
\draw[fill,red] (0,0) circle (0.08cm);
\draw[fill,red] (0,1) circle (0.08cm);
\draw[fill,red] (0,2) circle (0.08cm);
\draw[fill,red] (0,3) circle (0.08cm);
\draw[fill,red] (0,4) circle (0.08cm);
\draw[fill,red] (0,5) circle (0.08cm);
\draw[fill,red] (0,6) circle (0.08cm);

\end{tikzpicture}}
\caption{Scheme map of our Ni coding scheme.}
\label{Schememap}
\end{figure}
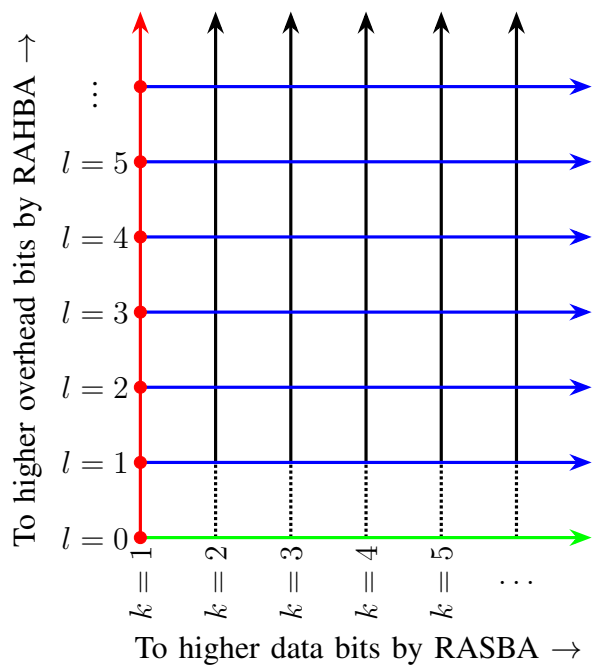

\subsection{RASBA Algorithm} \label{RASBA}
\begin{algorithm}[H]
\setstretch{1.2}
\caption{Constructing $\mathcal{D}_{l,k+1}$ given any $\mathcal{D}_{l,k}$}
\label{alg:rasba}
\begin{algorithmic}
\Require $\mathcal{D}_{l,k}$
\For{$i \gets 1$ to $2^k$}
    \State $\mathbf{X}_{i} \gets \mathcal{B}_i$ \texttt of $\mathcal{D}_{l,k}$
      \State $\mathbf{Y}_{2^l \times 1}, \mathbf{Z}_{2^l \times 1} \gets \emptyset$ \Comment{generate $2$ empty column vectors}
      \For{$j \gets 1$ to $2^l$}
      \If{$j$ is odd}
        \State $\mathbf{y}_{j} \gets \mathbf{x}_{j} + 0$ \Comment{appending 0 to the right side of the codeword $\mathbf{x}_j$}
        \State $\mathbf{z}_{j} \gets \mathbf{x}_{j} + 1$ \Comment{appending 1 to the left side of the codeword $\mathbf{x}_j$}
        \Else
        \State $\mathbf{y}_{j} \gets \mathbf{x}_{j} + 1$ 
        \State $\mathbf{z}_{j} \gets \mathbf{x}_{j} + 0$ 
        \EndIf
    \EndFor
    \State $\mathcal{B}_{2i-1}$ \texttt of $\mathcal{D}_{l,k+1} \gets \mathbf{Y}$ \Comment{all codewords in $\mathbf{Y}$ will produce bin $\mathcal{B}_{2i-1}$ in form $(l,k+1)$}
    \State $\mathcal{B}_{2i}$ \texttt of $\mathcal{D}_{l,k+1} \gets \mathbf{Z}$ \Comment{all codewords in $\mathbf{Z}$ will produce bin $\mathcal{B}_{2i}$ in form $(l,k+1)$}
\EndFor
\end{algorithmic}
\end{algorithm}
In this coding algorithm, the coding system will keep the number of overhead bits ($l$) constant and move to forms with higher $k$. Obviously, $n$ increases with $k$. It should be noted that the increment on $k$ is limited to a maximum of $1$ in each step. Since $l$ is constant, the number of codewords in each bin will not change with each move to the next form. 

To calculate $\mathcal{D}_{l,k+1}$ from the current  $\mathcal{D}_{l,k}$ by using RASBA algorithm, each bin $\mathcal{B}_1,\mathcal{B}_2,\ldots,\mathcal{B}_{2^k}$ of the current form $(l, k)$ generates two bins of the next form $(l,k+1)$. Therefore, each increment doubles the number of bins. The bin $\mathcal{B}_{2i-1}$ of the $\mathcal{D}_{l,k+1}$ will be produced by appending $0$ and $1$ to the right side of each codeword in the odd and even positions of the bin $\mathcal{B}_i$ of the current form $(l,k)$, respectively. Also, for the bin $\mathcal{B}_{2i}$ of the next form $(l,k+1)$, $1$ will be appended to codewords in the odd position, while $0$ will be appended to the codewords in the even position of the $\mathcal{B}_i$ in the current form $\mathcal{D}_{l,k}$. This method will generate all codewords in the next bin $\mathcal{B}_{2i}$ of the form $(l,k+1)$. These procedures will be repeated for all $i$ when $1 \leq i \leq 2^k$ to construct all codewords for all bins of the new form. The RASBA algorithm is described in Algorithm~\ref{alg:rasba}.
In practice, this coding algorithm allows the system to generate $\mathcal{D}_{l,k}$ from the previous $\mathcal{D}_{l,k-1}$, as shown in Fig. \ref{RASBA_Growth}.
\begin{figure}[ht]
\centering
\begin{tikzpicture}
\node [xshift=0.5cm, yshift=5cm] {$\mathcal{D}_{1,1}$};
\draw[-Stealth] (1,5) -- (1.5,5);
\node [xshift=2cm, yshift=5cm] {$\mathcal{D}_{1,2}$};
\draw[-Stealth] (2.5,5) -- (3,5);
\node [xshift=3.5cm, yshift=5cm] {$\mathcal{D}_{1,3}$};
\draw[-Stealth] (4,5) -- (4.5,5);
\node [xshift=5cm, yshift=5cm] {$\mathcal{D}_{1,4}$};
\draw[-Stealth] (5.5,5) -- (6,5);
\node [xshift=6.5cm, yshift=5cm] {$\mathcal{D}_{1,5}$};
\draw[-Stealth] (7,5) -- (7.5,5);
\node [xshift=8cm, yshift=5cm]
{$\dots$};

\node [xshift=0.5cm, yshift=4.5cm] {$\mathcal{D}_{2,1}$};
\draw[-Stealth] (1,4.5) -- (1.5,4.5);
\node [xshift=2cm, yshift=4.5cm] {$\mathcal{D}_{2,2}$};
\draw[-Stealth] (2.5,4.5) -- (3,4.5);
\node [xshift=3.5cm, yshift=4.5cm] {$\mathcal{D}_{2,3}$};
\draw[-Stealth] (4,4.5) -- (4.5,4.5);
\node [xshift=5cm, yshift=4.5cm] {$\mathcal{D}_{2,4}$};
\draw[-Stealth] (5.5,4.5) -- (6,4.5);
\node [xshift=6.5cm, yshift=4.5cm] {$\mathcal{D}_{2,5}$};
\draw[-Stealth] (7,4.5) -- (7.5,4.5);
\node [xshift=8cm, yshift=4.5cm]
{$\dots$};

\node [xshift=0.5cm, yshift=4cm] {$\mathcal{D}_{3,1}$};
\draw[-Stealth] (1,4) -- (1.5,4);
\node [xshift=2cm, yshift=4cm] {$\mathcal{D}_{3,2}$};
\draw[-Stealth] (2.5,4) -- (3,4);
\node [xshift=3.5cm, yshift=4cm] {$\mathcal{D}_{3,3}$};
\draw[-Stealth] (4,4) -- (4.5,4);
\node [xshift=5cm, yshift=4cm] {$\mathcal{D}_{3,4}$};
\draw[-Stealth] (5.5,4) -- (6,4);
\node [xshift=6.5cm, yshift=4cm] {$\mathcal{D}_{3,5}$};
\draw[-Stealth] (7,4) -- (7.5,4);
\node [xshift=8cm, yshift=4cm]
{$\dots$};
\node [xshift=3.75cm, yshift=3.5cm]
{$\vdots$};
\end{tikzpicture}
\caption{Generating different forms by RASBA method.}
  \label{RASBA_Growth}
\end{figure}
At the end of this section, an illustration of the transition of some of these $\mathcal{D}_{l,k}$ from their previous forms are shown.  
\subsection{RAHBA Algorithm} \label{RAHBA}
\begin{algorithm}[H]
\setstretch{1.2}
\caption{Constructing $\mathcal{D}_{l+1,k}$ given any $\mathcal{D}_{l,k}$}
\label{alg:rahba}
\begin{algorithmic}
\Require $\mathcal{D}_{l,k}$
\For{$i \gets 1$ to $\frac{2^k}{2}$}
    \State $\mathbf{X}_{i} \gets \mathcal{B}_{2i-1}$ \texttt of $\mathcal{D}_{l,k}$
    \State $\mathbf{Y}_{i} \gets \mathcal{B}_{2i}$ \texttt of $\mathcal{D}_{l,k}$
        \State $\mathbf{V}_{2^l \times 1}, \mathbf{W}_{2^l \times 1}, \mathbf{U}_{2^l \times 1}, \mathbf{Z}_{2^l \times 1} \gets \emptyset$ \Comment{generate $4$ empty column vectors}
      \For{$j \gets 1$ to $2^l$}
      \State $\mathbf{v}_{j} \gets \mathbf{x}_{j} + 0$ \Comment{appending 0 to the right side of the codeword $\mathbf{x}_j$}
      \State $\mathbf{w}_{j} \gets \mathbf{x}_{j} + 1$ \Comment{appending 1 to the right side of the codeword $\mathbf{x}_j$}
      \State $\mathbf{u}_{j} \gets \mathbf{y}_{j} + 0$ \Comment{appending 0 to the right side of the codeword $\mathbf{y}_j$}
      \State $\mathbf{z}_{j} \gets \mathbf{y}_{j} + 1$ \Comment{appending 1 to the right side of the codeword $\mathbf{y}_j$}
    \EndFor
    \State $\mathcal{B}_{2i-1}$ \texttt of $\mathcal{D}_{l,k+1} \gets [\mathbf{V};\mathbf{Z}]$ \Comment{concatenating all codewords in $\mathbf{V}$ and $\mathbf{Z}$}
    \State $\mathcal{B}_{2i}$ \texttt of $\mathcal{D}_{l,k+1} \gets \mathbf[\mathbf{W};\mathbf{U}]$ \Comment{concatenating all codewords in $\mathbf{W}$ and $\mathbf{U}$}
\EndFor
\end{algorithmic}
\end{algorithm}
In this coding algorithm, the coding system will keep the number of data bits ($k$) constant and increases the number of overhead bits ($l$), resulting in higher $n$. The increment of $l$ should be kept to a maximum of $1$ in each step, similar to the RASBA algorithm on the increment of $k$. Because of this, there will always be the same number of bins in each move, but every move will result in a doubling of the number of codewords in each bin. 

Codewords in two bins $\mathcal{B}_{2i-1}$ and $\mathcal{B}_{2i}$, of the current form $(l,k)$, will generate the same two bins in the $\mathcal{D}_{l+1,k}$. To accomplish this, $0$ and $1$ will be appended to the right side of all codewords in the bins $\mathcal{B}_{2i-1}$ and $\mathcal{B}_{2i}$ of the current form, respectively. The system can then produce $\mathcal{B}_{2i-1}$ of the next form $(l+1,k)$ by concatenating the new codewords produced by codewords in the bin $\mathcal{B}_{2i}$ to the end of the new codewords produced by $\mathcal{B}_{2i-1}$ of the current form. Similarly, bin $\mathcal{B}_{2i}$ of the next form $(l+1,k)$ will be created by concatenating new codewords, with the exception that $1$ will be appended to all codewords of bin $\mathcal{B}_{2i-1}$, and $0$ will be appended to all codewords of bin $\mathcal{B}_{2i}$ of the current form. This procedure will be repeated for all $i$, when $1 \leq i \leq \frac{2^k}{2}$.
The steps of the RAHBA algorithm are described in Algorithm~\ref{alg:rahba}.

As previously stated, the RAHBA algorithm allows the system to construct $\mathcal{D}_{l,k}$ from the $\mathcal{D}_{l-1,k}$, so the construction structure of $\mathcal{D}$ is as shown in Fig. \ref{RAHBA_Growth}, where $\mathcal{D}_{0,1}$ is the FF of the case when $l = 0$. 
\begin{figure}[ht]
\centering
\begin{tikzpicture}
\node [xshift=0.5cm, yshift=5cm] {$\mathcal{D}_{0,1}$};
\draw[-Stealth] (1,5) -- (1.5,5);
\node [xshift=2cm, yshift=5cm] {$\mathcal{D}_{1,1}$};
\draw[-Stealth] (2.5,5) -- (3,5);
\node [xshift=3.5cm, yshift=5cm] {$\mathcal{D}_{2,1}$};
\draw[-Stealth] (4,5) -- (4.5,5);
\node [xshift=5cm, yshift=5cm] {$\mathcal{D}_{3,1}$};
\draw[-Stealth] (5.5,5) -- (6,5);
\node [xshift=6.5cm, yshift=5cm] {$\mathcal{D}_{4,1}$};
\draw[-Stealth] (7,5) -- (7.5,5);
\node [xshift=8cm, yshift=5cm]
{$\dots$};

\node [xshift=0.5cm, yshift=4.5cm] {$\mathcal{D}_{1,2}$};
\draw[-Stealth] (1,4.5) -- (1.5,4.5);
\node [xshift=2cm, yshift=4.5cm] {$\mathcal{D}_{2,2}$};
\draw[-Stealth] (2.5,4.5) -- (3,4.5);
\node [xshift=3.5cm, yshift=4.5cm] {$\mathcal{D}_{3,2}$};
\draw[-Stealth] (4,4.5) -- (4.5,4.5);
\node [xshift=5cm, yshift=4.5cm] {$\mathcal{D}_{4,2}$};
\draw[-Stealth] (5.5,4.5) -- (6,4.5);
\node [xshift=6.5cm, yshift=4.5cm] {$\mathcal{D}_{5,2}$};
\draw[-Stealth] (7,4.5) -- (7.5,4.5);
\node [xshift=8cm, yshift=4.5cm]
{$\dots$};
\node [xshift=3.75cm, yshift=4cm]
{$\vdots$};
\end{tikzpicture}
\caption{Generating different forms by RAHBA method.}
  \label{RAHBA_Growth}
\end{figure}
It should be noted that, in Algorithms~\ref{alg:rasba} and \ref{alg:rahba}, capital letters (e.g. $V$) represent a $2^l \times 1$ column vectors, and small letters (e.g. $v$) denote each row of corresponding column vectors.
\subsection{Our Standard Path and Some Examples} \label{Our Standard Path and some examples}
The standard method used in this study to derive the $\mathcal{D}_{\tilde{l},\tilde{k}}$ from the zero point is based on starting from $l=0$ and $k=1$, then moving on the red line by RAHBA to reach the FF of the case $\tilde{l}$, and then moving to the direction of RASBA approach to achieve $\tilde{k}$, as illustrated in Fig~\ref{Schememap}. However, it is not the only path, and there are other options as well. In fact, by moving only on solid lines, the $\mathcal{D}$ of every intersection can be obtained from the $\mathcal{D}$ of every intersection whose $k$ and $l$ are less than or equal to the $l$ and $k$ of the final intersection, respectively. 
If $\mathcal{D}_{l_1,k_1}$ is the starting point and the goal is to reach $\mathcal{D}_{l_2,k_2}$ (for $\mathcal{D}$s with $l \geq 1$), then there are various ways to get there, where $n_1=k_1+l_1$ and $n_2=k_2+l_2$.
\begin{equation}
\binom{(l_2-l_1)+(k_2-k_1)}{(k_2-k_1)}=\binom{n_2-n_1}{k_2-k_1}=\binom{n_2-n_1}{l_2-l_1}.
\end{equation}
The $\mathcal{D}$ obtained via the standard path is referred to as the standard  $\mathcal{D}$. Below are a few illustrations of standard $\mathcal{D}$s.

\ifdefined\OneCol
 \begin{equation*}
\begin{aligned}[c]
&\mathcal{D}_{0,1}=
  \begin{bmatrix}
    0 & 1\\
  \end{bmatrix}  \\
&\mathcal{D}_{1,1}=
  \begin{bmatrix}
    00 & 01\\
    11 & 10\\
  \end{bmatrix}
\end{aligned}
\qquad
\begin{aligned}[c]
\mathcal{D}_{2,1}=
  \begin{bmatrix}
    000 & 001\\
    110 & 111\\
    011 & 010\\
    101 & 100\\
  \end{bmatrix}
\end{aligned}
\qquad
\begin{aligned}[c]
\mathcal{D}_{2,2}=
  \begin{bmatrix}
    0000 & 1110 & 0001 & 1111\\
    1011 & 1001 & 1010 & 1000\\
    1100 & 0010 & 1101 & 0011\\
    0111 & 0101 & 0110 & 0100\\
  \end{bmatrix}
\end{aligned}
\end{equation*}

\[
\mathcal{D}_{1,2}=
  \begin{bmatrix}
    000 & 001 & 010 & 011\\
    111 & 110 & 101 & 100\\
  \end{bmatrix}
\mathcal{D}_{1,3}=
  \begin{bmatrix}
  \scriptstyle  0000 & \scriptstyle 0001 & \scriptstyle 0010 & \scriptstyle 0011 & \scriptstyle 0100 & \scriptstyle 0101 & \scriptstyle 0110 & \scriptstyle 0111\\
   \scriptstyle 1111 & \scriptstyle 1110 & \scriptstyle 1101 & \scriptstyle 1100 & \scriptstyle 1011 & \scriptstyle 1010 & \scriptstyle 1001 & \scriptstyle 1000\\
  \end{bmatrix}
\]

\newcommand\scalemath[2]{\scalebox{#1}{\mbox{\ensuremath{\displaystyle #2}}}}

\[
\scalemath{0.9}{
\mathcal{D}_{3,1}=
  \begin{bmatrix}
    0000 & 0001\\
    0011 & 0010\\
    0110 & 0100\\
    1100 & 1000\\
    1001 & 0111\\
    1010 & 1011\\
    0101 & 1101\\
    1111 & 1110\\
  \end{bmatrix} 
\mathcal{D}_{3,2}=
  \begin{bmatrix}
    00000 & 00010 & 00001 & 00011\\
    00111 & 00101 & 00110 & 00100\\
    01100 & 01000 & 01101 & 01001\\
    11001 & 10001 & 11000 & 10000\\
    10010 & 01110 & 10011 & 01111\\
    10101 & 10111 & 10100 & 10110\\
    01010 & 11010 & 01011 & 11011\\
    11111 & 11101 & 11110 & 11100\\
  \end{bmatrix}
  }
\]

\begin{equation*}
\mathcal{D}_{4,1}=
    \rotatebox{90}{$\begin{bmatrix}
    00000 & 00001\\
    00110 & 00111\\
    01100 & 01101\\
    11000 & 11001\\
    10010 & 10011\\
    10100 & 10101\\
    01010 & 01011\\
    11110 & 11111\\
    00011 & 00010\\
    00101 & 00100\\
    01001 & 01000\\
    10001 & 10000\\
    01111 & 01110\\
    10111 & 10110\\
    11011 & 11010\\
    11101 & 11100\\
\end{bmatrix}$}
\end{equation*}
\begin{equation} \label{DsofNAs}
\end{equation}
\fi
\ifdefined\TwoCol
 \begin{equation*}
\begin{aligned}[c]
&\mathcal{D}_{0,1}=
  \begin{bmatrix}
    0 & 1\\
  \end{bmatrix}  \\
&\mathcal{D}_{1,1}=
  \begin{bmatrix}
    00 & 01\\
    11 & 10\\
  \end{bmatrix}
\end{aligned}
\qquad
\begin{aligned}[c]
\mathcal{D}_{2,1}=
  \begin{bmatrix}
    000 & 001\\
    110 & 111\\
    011 & 010\\
    101 & 100\\
  \end{bmatrix}
\end{aligned}
\end{equation*}
\begin{equation*}
\begin{aligned}[c]
\mathcal{D}_{2,2}=
  \begin{bmatrix}
    0000 & 1110 & 0001 & 1111\\
    1011 & 1001 & 1010 & 1000\\
    1100 & 0010 & 1101 & 0011\\
    0111 & 0101 & 0110 & 0100\\
  \end{bmatrix}
\end{aligned}
\end{equation*}
\[
\mathcal{D}_{1,2}=
  \begin{bmatrix}
    000 & 001 & 010 & 011\\
    111 & 110 & 101 & 100\\
  \end{bmatrix}
  \]
  \[
\mathcal{D}_{1,3}=
  \begin{bmatrix}
  \scriptstyle  0000 & \scriptstyle 0001 & \scriptstyle 0010 & \scriptstyle 0011 & \scriptstyle 0100 & \scriptstyle 0101 & \scriptstyle 0110 & \scriptstyle 0111\\
   \scriptstyle 1111 & \scriptstyle 1110 & \scriptstyle 1101 & \scriptstyle 1100 & \scriptstyle 1011 & \scriptstyle 1010 & \scriptstyle 1001 & \scriptstyle 1000\\
  \end{bmatrix}
\]

\newcommand\scalemath[2]{\scalebox{#1}{\mbox{\ensuremath{\displaystyle #2}}}}

\[
\scalemath{0.9}{
\mathcal{D}_{3,1}=
  \begin{bmatrix}
    0000 & 0001\\
    0011 & 0010\\
    0110 & 0100\\
    1100 & 1000\\
    1001 & 0111\\
    1010 & 1011\\
    0101 & 1101\\
    1111 & 1110\\
  \end{bmatrix} 
\mathcal{D}_{3,2}=
  \begin{bmatrix}
    00000 & 00010 & 00001 & 00011\\
    00111 & 00101 & 00110 & 00100\\
    01100 & 01000 & 01101 & 01001\\
    11001 & 10001 & 11000 & 10000\\
    10010 & 01110 & 10011 & 01111\\
    10101 & 10111 & 10100 & 10110\\
    01010 & 11010 & 01011 & 11011\\
    11111 & 11101 & 11110 & 11100\\
  \end{bmatrix}
  }
\]

\begin{equation*}
\mathcal{D}_{4,1}=
    \rotatebox{90}{$\begin{bmatrix}
    00000 & 00001\\
    00110 & 00111\\
    01100 & 01101\\
    11000 & 11001\\
    10010 & 10011\\
    10100 & 10101\\
    01010 & 01011\\
    11110 & 11111\\
    00011 & 00010\\
    00101 & 00100\\
    01001 & 01000\\
    10001 & 10000\\
    01111 & 01110\\
    10111 & 10110\\
    11011 & 11010\\
    11101 & 11100\\
\end{bmatrix}$}
\end{equation*}
\begin{equation} \label{DsofNAs}
\end{equation}
\fi
As previously mentioned, it is possible that different paths yield different $\mathcal{D}$ for a specific form, but all of them belong to the same family of code distribution (CD). As far as we observed, these different $\mathcal{D}$s may produce the same equivocation or a very slight, barely noticeable variation.

\section{The solutions of optimization for various cases} \label{The Answer of optimization for different cases}
In this section, the results of (\ref{optimization}) are examined for various cases with a different number of overhead bits $(l)$. We also explain how the solutions of (\ref{optimization}) helped us to develop the suggested Ni coding discussed in the previous section. 

\subsection{One overhead bit case ($\mathit{l=1}$)} \label{l1}
In this case, the coding system uses just one overhead bit. As an example, assume that $n=5$ and $k=4$. According to this assumption the potential rows of $\mathbf{R}$, discussed in section \ref{Equivocation calculation}, are  $\mathbf{r}_i \in (\mathbb{Z}^+)^{6}$. In this example, the coding system should choose $16$ rows from all potential rows. Note that a specific row can be selected more than once. The optimum answer, shown by $\mathbf{x}^*$, indicates that the maximum value of (\ref{Conditioned_Ent}) can be obtained by choosing one $[1,0,0,0,0,1]$ row, five $[0,1,0,0,1,0]$ rows, and ten $[0,0,1,1,0,0]$ rows from all potential rows. For example, $[1,0,0,0,0,1]$ indicates that there are two codewords in the bin, one with Hamming distance $0$ with the $z^n$ and one with Hamming distance five with $z^n$. By evaluating this pattern, we find that this is the optimum answer for the form $(1,4)$ and all $p$. The following is the approved pattern of choosing rows for other forms, i.e., other values of $k$. There are $\binom{n}{0}$ rows of $[1,0,\hdots,0,1]$, $\binom{n}{1}$ rows of $[0,1,0,\hdots,0,1,0]$, to finally $\binom{n}{ \DeclarePairedDelimiter\floor{\lfloor}{\rfloor} \floor{\frac{n}{2}}}$ rows of $[0,\hdots,0,1,1,0,\hdots,0]$ for the odd value of $n$, and $\binom{n}{\frac{n}{2}}$ rows of $[0,\hdots,0,2,0,\hdots,0]$ for even value of $n$. 
As a result, in the general instances when $l=1$, this pattern works as the optimal solution and solves the problem for all values $n$, $k$, and $p$.

As was covered in the final part of Section~\ref{Optimization}, there is a new limit that can be determined by solving the LP problem for various values of $p$. The LP-derived limit is depicted in Fig. \ref{nminus1curves} for the form (1,4) and compared with other results in Section \ref{Simulation results}.

\begin{lemma}
By using our suggested Ni coding, the proposed limit is reachable for all the forms when $l=1$.
\end{lemma}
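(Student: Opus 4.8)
The plan is to show that the Ni code for any form $(1,k)$ realizes, for \emph{every} eavesdropper observation $z^n$, a row multiset in $\mathbf{R}$ that exactly matches the optimal LP solution $\mathbf{x}^*$ described just above the lemma; by Lemma~\ref{lem:1} this simultaneously maximizes every conditioned equivocation $H(M|Z^n=z^n)$, and hence by~(\ref{equivovation_t}) the total equivocation attains the LP-derived limit. So the whole argument reduces to a structural claim about the code tables $\mathcal{D}_{1,k}$ produced by RASBA starting from $\mathcal{D}_{1,1}=\left[\begin{smallmatrix}00&01\\11&10\end{smallmatrix}\right]$.

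First I would establish the base case $\mathcal{D}_{1,1}$ directly: with $n=2$, $k=1$, each bin has $e=2^{l}=2$ codewords, the two bins are $\{00,11\}$ and $\{01,10\}$, and one checks that for any $z^2\in\{00,01,10,11\}$ one bin contains a word at distance $0$ and one at distance $2$ (giving a row $[1,0,1]$), while the other bin contains both words at distance $1$ (row $[0,2,0]$). This is precisely the optimal pattern for $n=2$: $\binom{n}{0}=1$ row $[1,0,1]$ and $\binom{n}{n/2}=1$ row $[0,2,0]$. Next I would show that the two bins of $\mathcal{D}_{1,1}$ are complementary pairs — each bin is closed under bitwise complementation — and argue (by induction through RASBA) that every bin of $\mathcal{D}_{1,k}$ remains a complementary pair $\{\mathbf{x},\bar{\mathbf{x}}\}$: appending $0$ to the odd-position word and $1$ to the even-position word of $\{\mathbf{x},\bar{\mathbf x}\}$ yields $\{\mathbf x0,\bar{\mathbf x}1\}=\{\mathbf x0,\overline{\mathbf x0}\}$, and similarly for the sibling bin. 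Given this invariant, for any $z^n$ the two codewords $\mathbf x,\bar{\mathbf x}$ in a bin have Hamming distances $d_H(\mathbf x,z^n)$ and $n-d_H(\mathbf x,z^n)$ to $z^n$, so each bin contributes exactly one row of the form "a $1$ in position $d$ and a $1$ in position $n-d$'' (or a single $2$ in position $n/2$ when $n$ is even and $d=n/2$).

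The final step is the counting: I must show that as the bin ranges over all $2^k$ bins of $\mathcal D_{1,k}$, the induced row-types occur with exactly the multiplicities $\binom{n}{0},\binom{n}{1},\dots,\binom{n}{\lfloor n/2\rfloor}$ (with the usual halving convention at the center for even $n$) called for by $\mathbf x^*$. Equivalently, among the $2^k$ pairs $\{\mathbf x,\bar{\mathbf x}\}$ partitioning $\mathbb F_2^n$ (note $2^k\cdot 2 = 2^{k+1}=2^n$ since $n=k+1$), the number of pairs whose distance-profile to a fixed $z^n$ is $\{d,n-d\}$ should be $\binom{n}{d}$ for $d<n/2$ and $\binom{n}{n/2}/2\cdot 2$-adjusted at the center. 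But the pairs $\{\mathbf x,\bar{\mathbf x}\}$ are just the cosets of $\{0^n,1^n\}$, so the number of words at distance $d$ from $z^n$ is $\binom{n}{d}$ and these split into pairs $\{d,n-d\}$ in the obvious way — this forces the multiset of row-types to be exactly $\mathbf x^*$ regardless of $z^n$. I expect this last combinatorial bookkeeping — confirming the center-term convention for even $n$ and that the RASBA ordering never splits a complementary pair across sibling bins — to be the main obstacle; the reliability ($P_e=0$) and the optimality transfer are immediate from the noiseless main channel and Lemma~\ref{lem:1}.
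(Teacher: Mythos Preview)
Your proposal is correct and follows essentially the same route as the paper: both arguments hinge on the observation that every bin of $\mathcal{D}_{1,k}$ is a complementary pair $\{\mathbf{x},\bar{\mathbf{x}}\}$ (the paper calls these ``opposite codewords''), so that for any $z^n$ the two words in a bin sit at distances $d$ and $n-d$, reproducing the optimal row pattern of $\mathbf{x}^*$ for every observation. Your version is in fact more detailed than the paper's---you supply the RASBA induction for the complementary-pair invariant and the $\binom{n}{d}$ counting explicitly, whereas the paper leaves these implicit---and the two ``obstacles'' you flag (the even-$n$ center term and the non-splitting of pairs under RASBA) are routine bookkeeping that your own sketch already handles.
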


\begin{proof}
To prove it, we need to show that there is a code that results in the $2^n$ $\mathbf{R}$ of the optimum answer when each of its codewords XORs with each of $2^n$ $z^n$s. Then, the total equivocation, which is the average of these $2^n$ values, will produce the maximum and optimum result.

When a $z^n$ is added to the codewords of the code table, the bits of the codeword corresponding to $1$ in the $z^n$, will flip. The optimum solution in this case, is the one with an $i$ Hamming distance and an $n-i$ Hamming distance with $z^n$. Flipping each subset of bits of all the codewords in the code table is equivalent to their Hamming distances to each $z^n$. There is only one distribution for the code table such that after flipping every certain subset of bits of all $2^n$ codewords, all $2^n$ $\mathbf{R}$ matrices would be the solution for the case when $l=1$. This occurs as a result of the placement of each codeword and its unique \textit{opposite codeword} in a bin. The codeword that has all of its bits reversed is said to be the opposite codeword. In each bin, every pair of corresponding bits of the two codewords (e.g. the $i$-th bit in each) are opposite. Therefore, if none or some or all of the bits of two opposite codewords are flipped, the two flipped codewords (considering them as Hamming distances with a $z^n$) will also be opposite, and thus, will have the pattern and the matrix $\mathbf{R}$ of the optimum answer. Basically, these code tables of opposite codewords for $l=1$, are the proposed Ni codes for the case when $l=1$. The three code tables of $\mathcal{D}_{1,1}$, $\mathcal{D}_{1,2}$, and $\mathcal{D}_{1,3}$ have been shown in Section \ref{Our Standard Path and some examples}.
\end{proof}

This concept of opposite codewords was presented to support the analytical proof and shows how the suggested Ni coding scheme is able to attain the desired LP-derived limits for $l=1$. The cohesive Ni coding scheme is the great picture that this opposite notion fits in it when $l=1$. The $\mathcal{D}$s of this case, can be obtained by the RASBA technique and moving on the horizontal line where $l=1$ to increase $k$ as illustrated in Fig. \ref{Schememap}.

\subsection{More than one overhead bit cases ($\mathit{l=2,3,4}$, and etc.)} \label{l234}
For the other cases when the coding system uses more than one overhead bit, e.g., $l=2,3,4$, different  $\mathbf{x^*}$ for various values of $p$, can be obtained by solving (\ref{optimization}) for these specific cases. The solutions produce curves, each representing each form's LP-derived limit. Four of these LP-derived limits, each limit for a form as well as each form for a different case, are displayed and compared with other results in Section \ref{Simulation results}.

In these cases, the code tables generated by our suggested Ni coding technique, produce the best secrecy code with the highest equivocation, among numerous random bin codes.  
In the FF of the case $l=2$, which is $(2,1)$, the challenge is the distribution of all eight codewords into two bins (groups) of four codewords. There are a total of $70$ distinct ways $\binom{8}{4}=70$ to accomplish this distribution. One of these distributions, which later proved to be the Ni code for the form $(2,1)$, appeared to produce the maximum total equivocation (\ref{equivovation_t}), for all values of $p$, as it is shown in Section \ref{Our Standard Path and some examples}. This combination of distributions has two characteristics: first, in every cell, the summation of all the binary codewords is a $\mathbf{0}$ vector. The second feature is that the complementary of each codeword in a bin, in terms of the number of  $1$s, exists in the other bin. It can be argued that these characteristics help the Ni coding scheme work well in this situation. 
To find and extend these two characteristics for higher values of $n$ in the current case, the RASBA scheme was developed. We observed that these two characteristics by using Ni's code table of $(2,2)$ will be maintained, which is displayed as $\mathcal{D}_{2,2}$; and from that to $\mathcal{D}$s of higher forms of this case.

When $l=3$ and $l=4$, the distribution procedure is the same as when $l=2$. As a result, the best binning for their FF are those shown as $\mathcal{D}_{3,1}$ and $\mathcal{D}_{4,1}$, respectively, in Section \ref{Our Standard Path and some examples}. Once more, using the RASBA technique on small forms results in higher forms of Ni coding on their specific cases. For example, it is feasible to reach $\mathcal{D}_{3,2}$  from $\mathcal{D}_{3,1}$, or $\mathcal{D}_{3,3}$ from $\mathcal{D}_{3,2}$. Continued use of this strategy will result in higher $\mathcal{D}$s of these cases. Furthermore, the RAHBA scheme was developed to generate the $\mathcal{D}$ of the FF of a case from the $\mathcal{D}$ of the FF of its lower case, just by moving on the red line as illustrated in Fig. \ref{Schememap}. This was done by analyzing the connections between the $\mathcal{D}$s of the FF on various cases. 

Apart from all these, for those cases and forms of Ni codes that have linear properties, the matrices $\mathbf{G}$ and $\mathbf{H}^T$ for encoding and decoding procedures have been explored in Section \ref{Matrices to generate the code}.
\subsection{Simulation Results} \label{Simulation results}
In a nutshell, the secrecy capacity of the system model, described in Section \ref{System Model} is as follows
\begin{equation}  \label{wiretapcapacityvalue}
C_s = 1 - (1-h(P) )=h(P).
\end{equation}
Wyner in \cite{Wyner1975}, showed that
\begin{equation}  \label{wynerlimit} 
\frac{H(M|Z^n)}{n}  \leq C_s.
\end{equation}
Since conditioning always reduces entropy, another limit for normalized equivocation is
\begin{equation}  \label{coverlimit}
0 \leq \frac{H(M|Z^n)}{n} \leq \frac{H(M)}{n}=\frac{k}{n}=R,
\end{equation} 
and from (\ref{wiretapcapacityvalue}), (\ref{wynerlimit}), and (\ref{coverlimit}) we have
\begin{equation}    \label{bothlimit}
0 \leq \frac{H(M|Z^n)}{n} \leq \min  \{h(p),R\}.
\end{equation}
In the rest of the paper, we refer to the bound of (\ref{bothlimit}) as the \textit{Infinite Blocklength Limit}.

To evaluate the performance of our proposed Ni code, the equivocation of our coding scheme is compared with other random binning codes. Moreover, we compared the aforementioned equivocation rates and the LP-derived limits with the infinite blocklength limit (\ref{bothlimit}).

As shown in the following figures, as $n$ increases, the interval of $p$, in which the $h(p)$ is determinative in (\ref{bothlimit}), expands, and thus the value of $p$ at which the two bounds in (\ref{bothlimit}) intersect tends to $0.5$.

\ifdefined\OneCol
\begin{figure}[ht]
\centering
\begin{minipage}[b]{0.48\linewidth}
\centerline{\includegraphics[width=\textwidth]{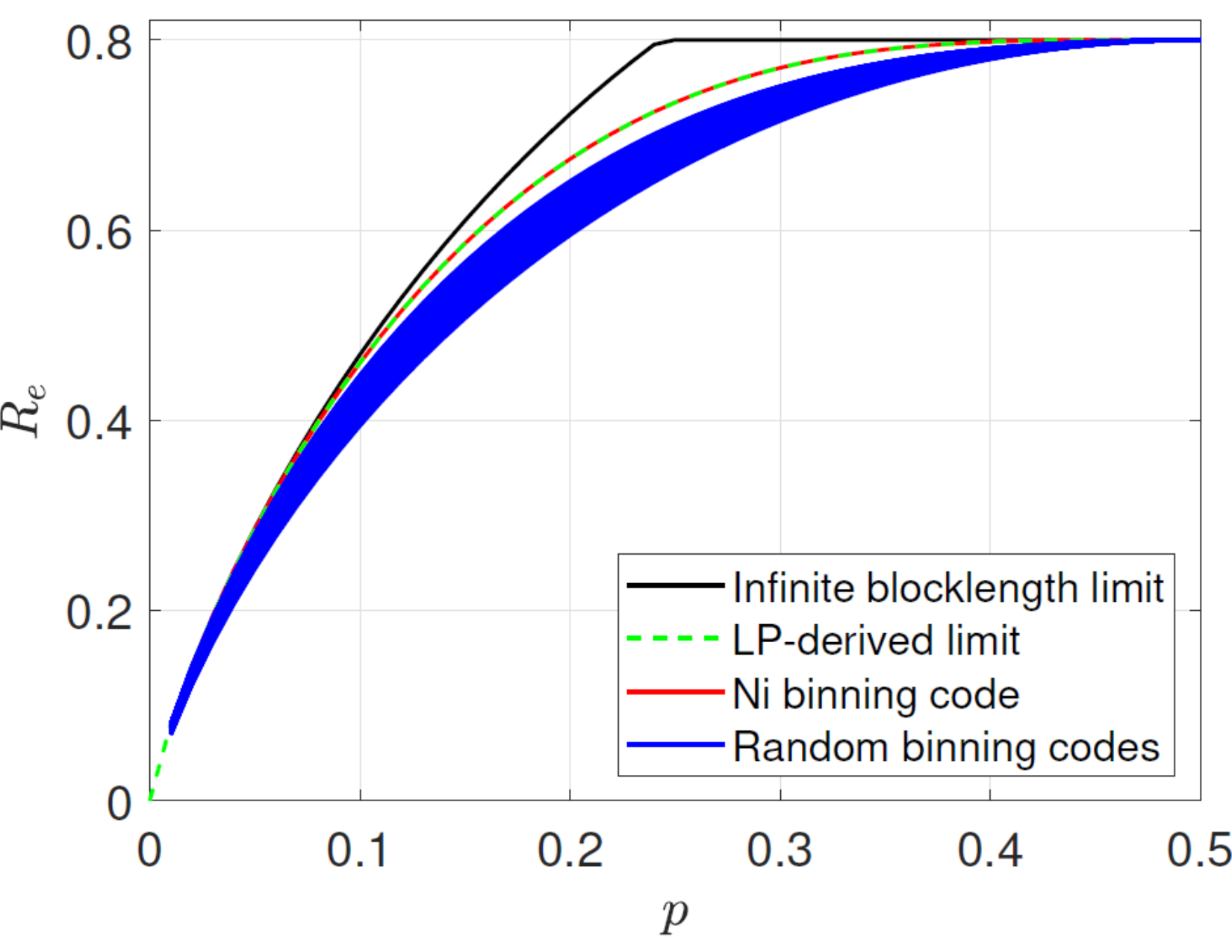}}
\caption{Infinite blocklength limit, LP-derived limit, and equivocation results of the Ni code and different binning codes for $l=1$ and $n=5$.}
\label{nminus1curves}
\end{minipage}
\quad
\begin{minipage}[b]{0.48\linewidth}
\centerline{\includegraphics[width=\textwidth]{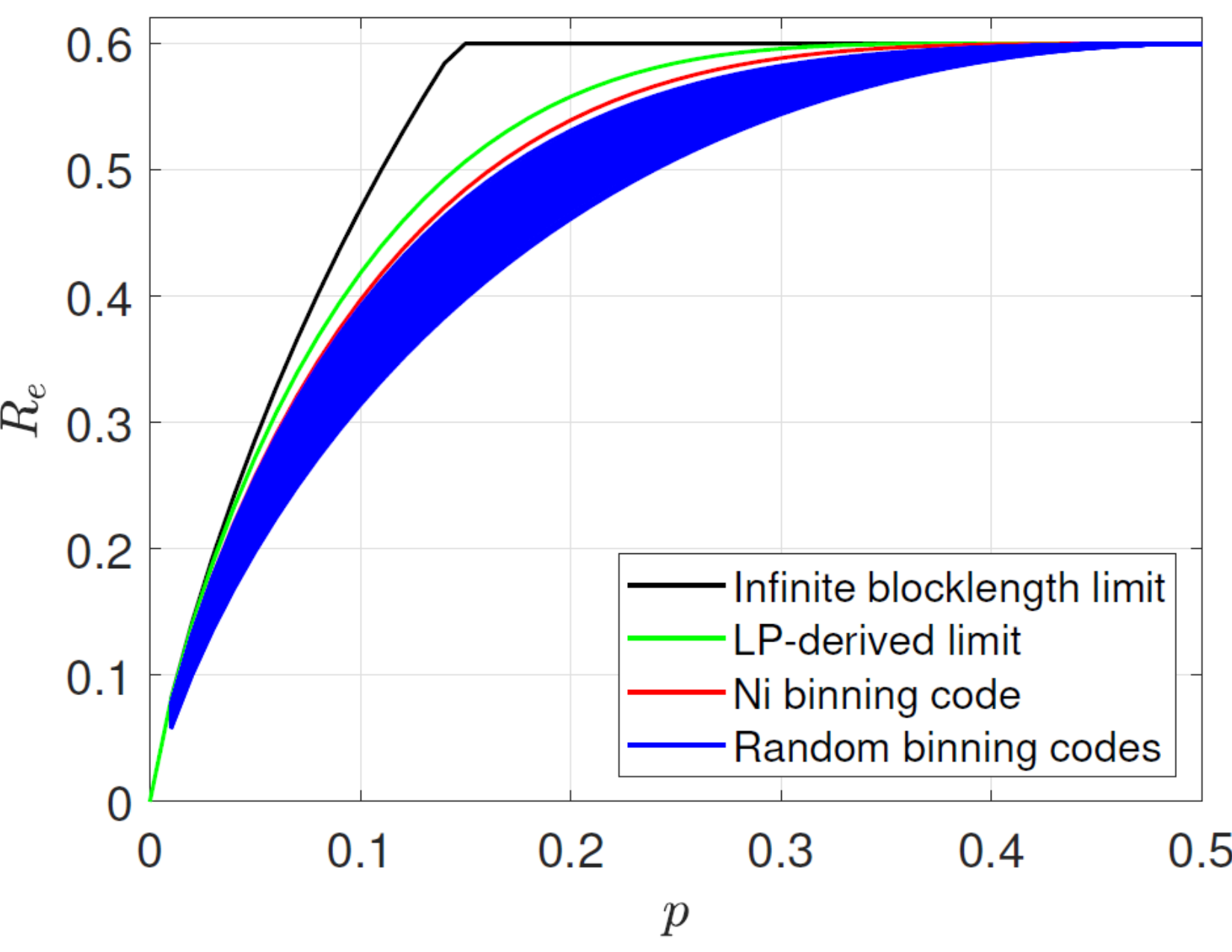}}
\caption{Infinite blocklength limit, LP-derived limit, and equivocation results of the Ni code and different binning codes for $l=2$ and $n=5$.}
\label{fignminus2}
\end{minipage}
\end{figure}

\begin{figure}[ht]
\centering
\begin{minipage}[b]{0.48\linewidth}
\centerline{\includegraphics[width=\textwidth]{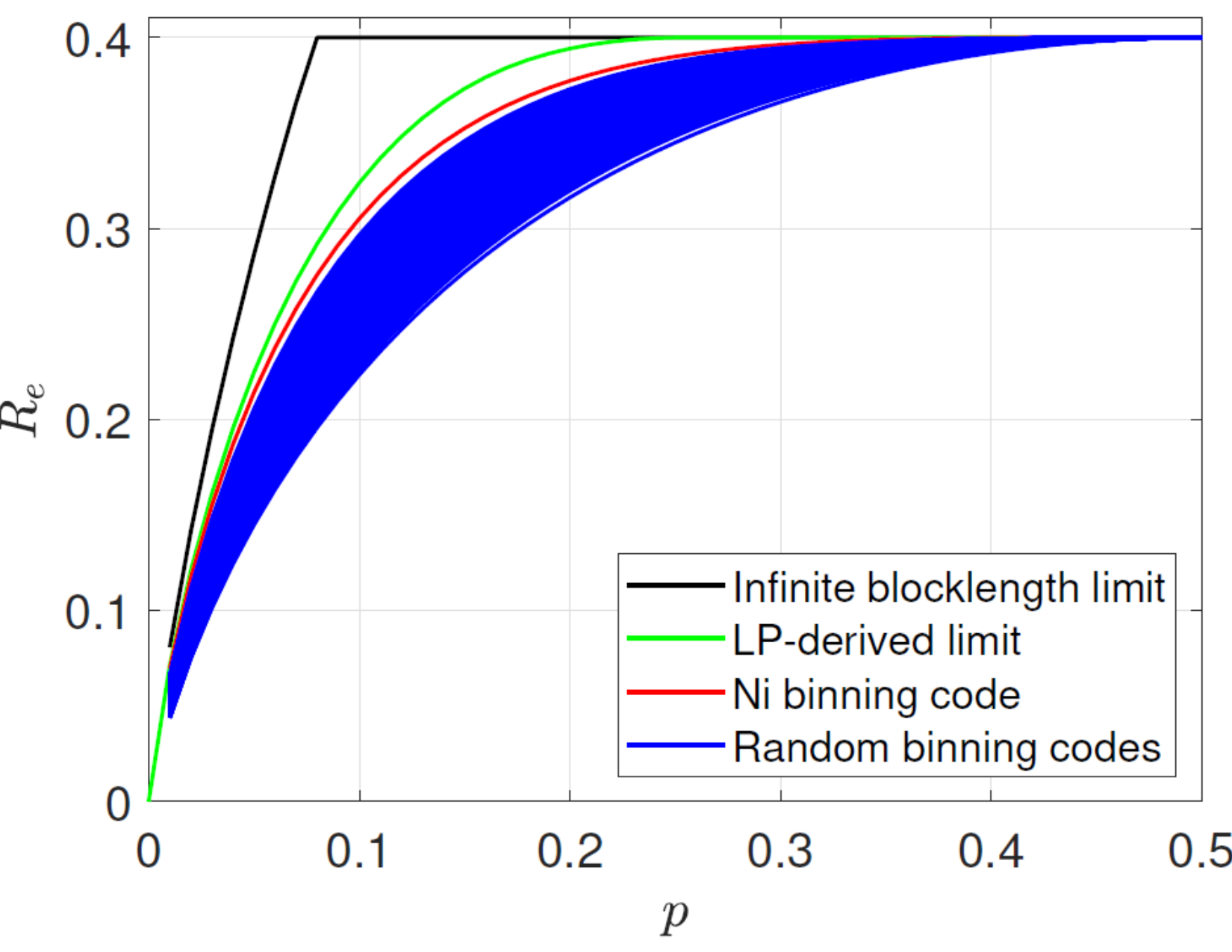}}
\caption{Infinite blocklength limit, LP-derived limit, and equivocation results of the Ni code different binning codes for $l=3$ and $n=5$.}
\label{nminus3curves}
\end{minipage}
\quad
\begin{minipage}[b]{0.48\linewidth}
\centerline{\includegraphics[width=\textwidth]{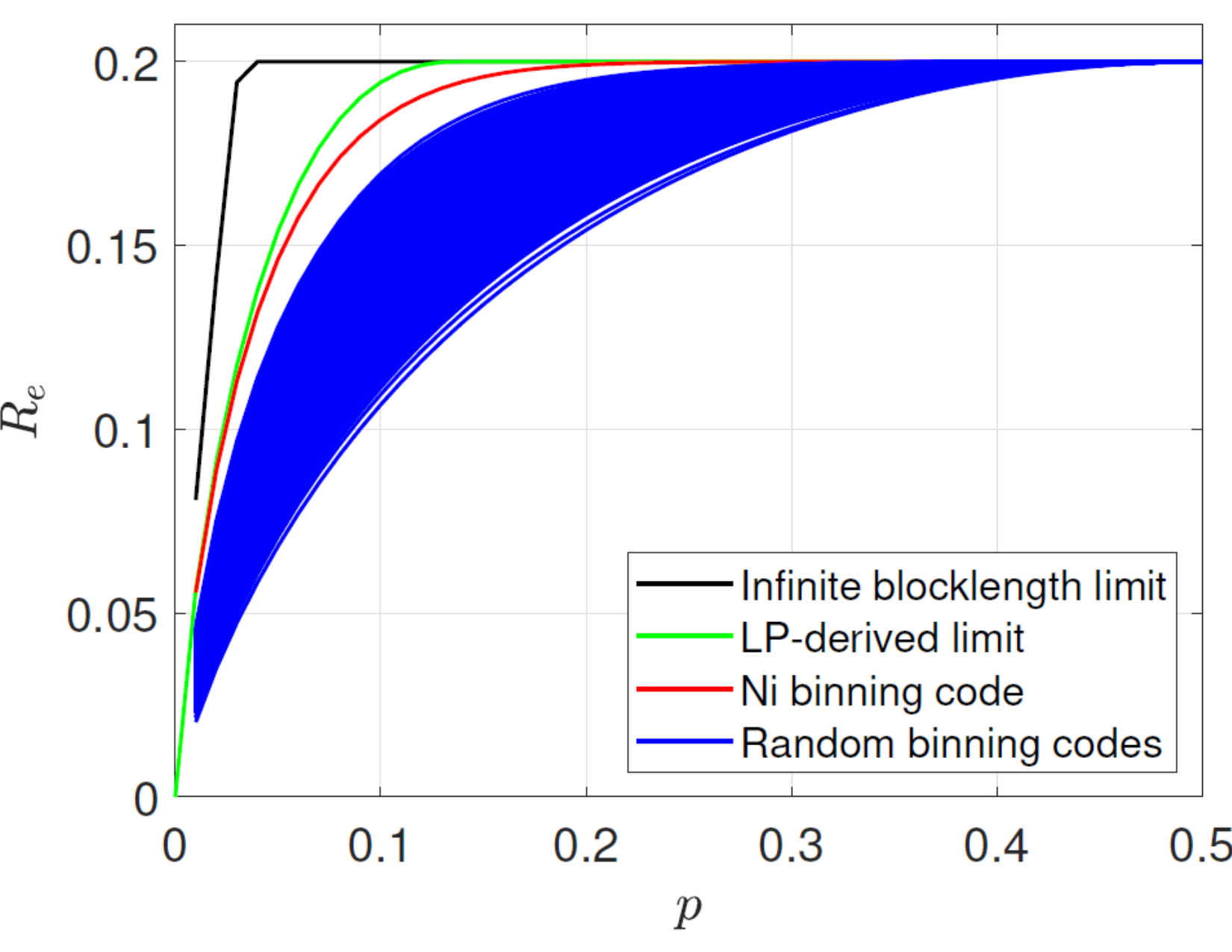}}
\caption{Infinite blocklength limit, LP-derived limit, and equivocation results of the Ni code and different binning codes for $l=4$ and $n=5$.}
\label{nminus4curves}
\end{minipage}
\end{figure}
\fi

\ifdefined\TwoCol
  \begin{figure}[htbp]
\centerline{\includegraphics[scale=0.35]{Fig/l1.pdf}}
\caption{Infinite blocklength limit, LP-derived limit, and equivocation results of the Ni code and different binning codes for $l=1$ and $n=5$.}
\label{nminus1curves}
\end{figure}

\begin{figure}[tbp]
\centerline{\includegraphics[scale=0.35]{Fig/l2.pdf}}
\caption{Infinite blocklength limit, LP-derived limit, and equivocation results of the Ni code and different binning codes for $l=2$ and $n=5$.}
\label{fignminus2}
\end{figure}

\begin{figure}[htbp]
\centerline{\includegraphics[scale=0.35]{Fig/l3.pdf}}
\caption{Infinite blocklength limit, LP-derived limit, and equivocation results of the Ni code and different binning codes for $l=3$ and $n=5$.}
\label{nminus3curves}
\end{figure}

\begin{figure}[htbp]
\centerline{\includegraphics[scale=0.35]{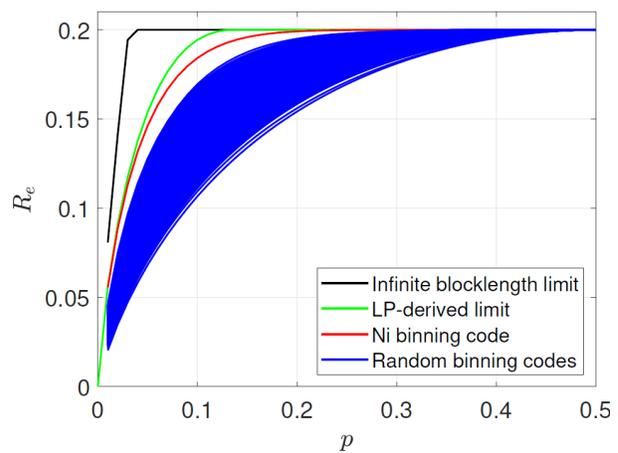}}
\caption{Infinite blocklength limit, LP-derived limit, and equivocation results of the Ni code and different binning codes for $l=4$ and $n=5$.}
\label{nminus4curves}
\end{figure}
\fi

We showed the performance of our coding approach and the LP-derived limits for four forms of $(1,4)$, $(2,3)$, $(3,2)$, and $(4,1)$, in figures \ref{nminus1curves}, \ref{fignminus2}, \ref{nminus3curves}, and \ref{nminus4curves},  respectively. In each of the figures, the black curve represents the infinite blocklength limit (\ref{bothlimit}), and the green curve indicates our presented LP-derived limit that can be obtained by solving (\ref{optimization}) for every $p$. Additionally, the red curve is the result of our Ni coding scheme from the standard $\mathcal{D}$. It can be seen in Fig. \ref{nminus1curves} that the green curve and the red curve coincide with each other since it is possible to achieve the LP-derived limit by the Ni coding scheme. Clearly, there are significant gaps between the red curves and the blue curves, which represent the equivocation rates determined by the Ni code and numerous random codes, respectively. Due to the symmetry of the results, the curves are shown for $0 \leq p \leq 0.5$.

We also derived the total number of random binning codes based on the binning structure, distributing all $2^n$ codewords uniformly in the bins, as follows
\begin{equation} \label{hugenumber}
\prod _{i=1}^{2^k} \binom{ei-1}{e-1}.
\end{equation}
Hence, for these four forms of $(1,4)$, $(2,3)$, $(3,2)$, and $(4,1)$, the total number of codes are $\prod _{i=1}^{16} \binom{2i-1}{1}\approx {1.92} \times 10^{17}$, $\prod _{i=1}^{8} \binom{4i-1}{3}\approx {5.93} \times 10^{19}$, $\prod _{i=1}^{4} \binom{8i-1}{7}\approx {4.15} \times 10^{15}$, and $\prod _{i=1}^{2} \binom{16i-1}{15}\approx {3.01} \times 10^{8}$, respectively. So, systematically simulating all of them is not really feasible. As a result, to provide accuracy in comparison, the number of generated random binning codes in each of the four examples is $10^5$ codes. Although, there was not a noticeable difference between the realization of $10^4$ and $10^5$, in the output of each of these examples. 
\section{Code Generation for linear cases} \label{Matrices to generate the code}
This section explains the construction of matrices $\mathbf{G}$ and $\mathbf{H}$, in order to encode and decode messages in the Ni coding scheme for linear cases. The $k$ bit message of $m^k$ is encoded by $x^n=[m^k \ v^l]\mathbf{G}$, where $v$ is a sequence with length $l$ that generates redundancy. The codeword is decoded as $\hat{m}^k= x^n \mathbf{H}^T$. The generator and the parity-check matrices are assumed to be known to Eve. Here, they are specified to a certain $(l,k)$ form with left and right super-scripts, i.e. ${}^l\mathbf{G}^k$ and ${}^l\mathbf{H}^k$.
The pattern to generate $\mathbf{G}$ and $\mathbf{H}$ for all forms of all even cases has been discovered. They are displayed with the left super-script of $ev$. For odd cases, the binning of the Ni code is not generally linear, and the matrices $\mathbf{G}$ and $\mathbf{H}$ are not achievable for these cases. However, there are two exceptions to this rule. First, the pattern of matrices $\mathbf{G}$ and $\mathbf{H}$ for all forms of the case $l=1$ are achievable, and it is interesting that the code for this case is optimum regarding equivocation. The matrices ${}^1\mathbf{G}$ and ${}^1\mathbf{H}$ are used to represent generator and parity-check matrices for the case $l=1$, respectively. The second exception is when $l = 3$ and $k = 1$, form $(3, 1)$, and generator and parity-check matrices of this form are also achievable. Furthermore, we observe that the following patterns, which are presented in both $\mathbf{G}$ and $\mathbf{H}^T$ can be extended to produce generator and parity-check matrices of other values of $n$ and $k$. To better understand these patterns, we illustrated generator and parity-check matrices of some examples at the end of this section.

The elements of matrices $\mathbf{G}$ and $\mathbf{H}^T$ at row $i$ and column $j$ can be shown as follows
\ifdefined\OneCol
 \begin{equation} \label{MatricesPatterns}
\begin{aligned}[c]
&[{}^{1}\mathbf{G}]_{i,j} = \begin{cases}
             0  & i=j \leq k \\
             1  & \text{others}
\end{cases}  \\
&[{}^{ev}\mathbf{G}]_{i,j} = \begin{cases}
             0  & 2 \leq i=j \\
             1  & \text{others}
       \end{cases} \\
&[{{}^{1}\mathbf{H}}^T]_{i,j} = \begin{cases}
             1  & i=j \leq k \\
             1  & i=k+1 , \forall j \\
             0  & \text{others}
\end{cases}
\end{aligned}
\qquad
\begin{aligned}[c]
[{{}^{ev}\mathbf{H}}^T]_{i,j} = \begin{cases}
             1  & i=1, \forall j \geq 2  \\
             1  & j=1, \forall i \geq 2 \\
             1  & 2 \leq i=j \leq k \\
             1  & i=j=1\\  &\text{ for odd} \ k\text{s} \\
             0  & i=j=1\\  &\text{ for even} \   k\text{s} \\
             0  & \text{others}
       \end{cases}
\end{aligned}
\end{equation}
\fi
\ifdefined\TwoCol
\begin{equation} \label{MatricesPatterns}
\begin{aligned}[c]
&[{}^{1}\mathbf{G}]_{i,j} = \begin{cases}
             0  & i=j \leq k \\
             1  & \text{others}
\end{cases}  \\
&[{}^{ev}\mathbf{G}]_{i,j} = \begin{cases}
             0  & 2 \leq i=j \\
             1  & \text{others}
       \end{cases} \\
&[{{}^{1}\mathbf{H}}^T]_{i,j} = \begin{cases}
             1  & i=j \leq k \\
             1  & i=k+1 , \forall j \\
             0  & \text{others}
\end{cases}
\end{aligned}
\hspace{-13pt}
\begin{aligned}[c]
[{{}^{ev}\mathbf{H}}^T]_{i,j} = \begin{cases}
             1  & i=1, \forall j \geq 2  \\
             1  & j=1, \forall i \geq 2 \\
             1  & 2 \leq i=j \leq k \\
             1  & i=j=1\\  &\text{ for odd} \ k\text{s} \\
             0  & i=j=1\\  &\text{ for even} \   k\text{s} \\
             0  & \text{others}
       \end{cases}
\end{aligned}
\end{equation}
\fi
The following examples of $\mathbf{G}_{n \times n}$ and $\mathbf{H}^T_{n \times k}$ illustrate these construction patterns
\ifdefined\OneCol
\[
{}^1\mathbf{G}^4=
  \begin{bmatrix}
    0 & 1 & 1 & 1 & 1\\
    1 & 0 & 1 & 1 & 1\\
    1 & 1 & 0 & 1 & 1\\
    1 & 1 & 1 & 0 & 1\\
    1 & 1 & 1 & 1 & 1\\
  \end{bmatrix}  \hspace{0.4em}
{{}^1\mathbf{H}^4}^T=
  \begin{bmatrix}
   1 & 0 & 0 & 0\\
   0 & 1 & 0 & 0\\
   0 & 0 & 1 & 0\\
   0 & 0 & 0 & 1\\
   1 & 1 & 1 & 1\\
  \end{bmatrix}  \hspace{0.4em}
{}^2\mathbf{G}^3=
  \begin{bmatrix}
    1 & 1 & 1 & 1 & 1\\
    1 & 0 & 1 & 1 & 1\\
    1 & 1 & 0 & 1 & 1\\
    1 & 1 & 1 & 0 & 1\\
    1 & 1 & 1 & 1 & 0\\
  \end{bmatrix}  \hspace{0.4em}
{{}^2\mathbf{H}^3}^T=
  \begin{bmatrix}
   1 & 1 & 1\\
   1 & 1 & 0\\
   1 & 0 & 1\\
   1 & 0 & 0\\
   1 & 0 & 0\\
  \end{bmatrix}  
\]
\[
{}^{4}\mathbf{G}^1=
  \begin{bmatrix}
    1 & 1 & 1 & 1 & 1\\
    1 & 0 & 1 & 1 & 1\\
    1 & 1 & 0 & 1 & 1\\
    1 & 1 & 1 & 0 & 1\\
    1 & 1 & 1 & 1 & 0\\
  \end{bmatrix}  \quad \quad
{{}^{4}\mathbf{H}^1}^T=
  \begin{bmatrix}
   1\\
   1\\
   1\\
   1\\
   1\\
  \end{bmatrix}  \quad \quad
{}^{3}\mathbf{G}^1=
  \begin{bmatrix}
    1 & 0 & 0 & 0\\
    1 & 1 & 0 & 0\\
    1 & 0 & 1 & 0\\
    1 & 1 & 1 & 1\\
  \end{bmatrix}  \quad \quad
{{}^{3}\mathbf{H}^1}^T=
  \begin{bmatrix}
   1\\
   1\\
   1\\
   1\\
  \end{bmatrix}
\]
\fi
\ifdefined\TwoCol
\[
{}^1\mathbf{G}^4=
  \begin{bmatrix}
    0 & 1 & 1 & 1 & 1\\
    1 & 0 & 1 & 1 & 1\\
    1 & 1 & 0 & 1 & 1\\
    1 & 1 & 1 & 0 & 1\\
    1 & 1 & 1 & 1 & 1\\
  \end{bmatrix},  \hspace{0.4em}
{{}^1\mathbf{H}^4}^T=
  \begin{bmatrix}
   1 & 0 & 0 & 0\\
   0 & 1 & 0 & 0\\
   0 & 0 & 1 & 0\\
   0 & 0 & 0 & 1\\
   1 & 1 & 1 & 1\\
  \end{bmatrix},
  \]
  \[
{}^2\mathbf{G}^3=
  \begin{bmatrix}
    1 & 1 & 1 & 1 & 1\\
    1 & 0 & 1 & 1 & 1\\
    1 & 1 & 0 & 1 & 1\\
    1 & 1 & 1 & 0 & 1\\
    1 & 1 & 1 & 1 & 0\\
  \end{bmatrix},  \hspace{0.4em}
{{}^2\mathbf{H}^3}^T=
  \begin{bmatrix}
   1 & 1 & 1\\
   1 & 1 & 0\\
   1 & 0 & 1\\
   1 & 0 & 0\\
   1 & 0 & 0\\
  \end{bmatrix}, 
\]
\[
{}^{4}\mathbf{G}^1=
  \begin{bmatrix}
    1 & 1 & 1 & 1 & 1\\
    1 & 0 & 1 & 1 & 1\\
    1 & 1 & 0 & 1 & 1\\
    1 & 1 & 1 & 0 & 1\\
    1 & 1 & 1 & 1 & 0\\
  \end{bmatrix},  \quad \quad
{{}^{4}\mathbf{H}^1}^T=
  \begin{bmatrix}
   1\\
   1\\
   1\\
   1\\
   1\\
  \end{bmatrix},
  \]
  \[
{}^{3}\mathbf{G}^1=
  \begin{bmatrix}
    1 & 0 & 0 & 0\\
    1 & 1 & 0 & 0\\
    1 & 0 & 1 & 0\\
    1 & 1 & 1 & 1\\
  \end{bmatrix}, \quad \quad
{{}^{3}\mathbf{H}^1}^T=
  \begin{bmatrix}
   1\\
   1\\
   1\\
   1\\
  \end{bmatrix}. \numberthis \label{MatricesExamples}
\]
\fi
All of these patterns and examples of $\mathbf{G}$ and $\mathbf{H}^T$ result in
\begin{equation} \label{ensure}
 \mathbf{G} \mathbf{H}^T=\begin{bmatrix}
    \mathbf{I}_k\\
    \mathbf{0}_{l \times k}\\
  \end{bmatrix},
\end{equation}
which demonstrates the validity of our matrices for wiretap coding. 
We provide another approach to obtain bins of our Ni code for all forms and cases in general, in the next section.
\section{Coding Technique of $\mathcal{B}\mathcal{Q}+\mathcal{S}$ } \label{The search for $AC+B$}
As mentioned in the previous section, the matrices $\mathbf{G}$ and $\mathbf{H}$ for odd cases do not exist (with the two exceptions, all forms of the case $l=1$ and the specific form $(3,1)$), because they are not linear. However, we discovered an alternative technique, $\mathcal{B}\mathcal{Q}+\mathcal{S}$, that can be used by the system to generate the code table for all forms of all cases. RASBA and RAHBA are two techniques that have been previously presented for generating bins from one form to another one. In this section, first, we will linearize the procedure of each of these techniques by using $\mathcal{B}\mathcal{Q}+\mathcal{S}$ technique, and then, we propose the combination approach to using both of them.

\subsection{$\mathbf{AC+B}$ for RASBA} \label{$AC+B$ for RASBA}

By employing the RASBA algorithm, the system can generate the next bins of the next form as follows 
\begin{align}  \label{^l_iC^k+1}
{}^{l}_{i}\mathcal{B}^{k+1}=
  \begin{bmatrix}
    {}^{l}_{i}\mathcal{B}^k\\
  \end{bmatrix} 
 \times   \begin{bmatrix}
    \mathbf{I}_{n} & \mathbf{0}_{n \times 1}\\
  \end{bmatrix}  + \begin{bmatrix} \begin{matrix} \mathbf{0}_{2^{l} \times n} \end{matrix} \quad \begin{matrix}
  \mathbf{[01]}^0_{2^{l} \times 1} \\
\end{matrix}   
    \end{bmatrix},
\end{align} 
\begin{align}     \label{^l_i+1C^k+1}
{}^{l}_{i+1}\mathcal{B}^{k+1}=
  \begin{bmatrix}
    {}^{l}_{i}\mathcal{B}^k\\
  \end{bmatrix} 
 \times   \begin{bmatrix}
    \mathbf{I}_{n} & \mathbf{0}_{n \times 1}\\
  \end{bmatrix}  + \begin{bmatrix} \begin{matrix} \mathbf{0}_{2^{l} \times n} \end{matrix} \quad \begin{matrix}
  \mathbf{[01]}^1_{2^{l} \times 1} \\
\end{matrix}   
    \end{bmatrix}, 
\end{align}
where ${}^{l}_{i}\mathcal{B}_{2^l \times n}^k$ is the $i$th bin of the form $\mathcal{D}_{l,k}$. In other words, it represents the $i$th bin in the code table of the form $k$ and the case $l$. 
In (\ref{^l_iC^k+1}) and (\ref{^l_i+1C^k+1}), we removed its sub-script to avoid repetition and make it simple to follow. Also, suppose a vertical sequence of the length $n$, consisting of $0$ and $1$ successively after each other, where it may start with either $0$ or $1$. If $\rho$ sequences are placed together as columns of a matrix, this sub-matrix is shown as $\mathbf{[01]}^i_{n \times \rho}$, where $i \in \{0,1,2\}$ and indicates the sequences start with $i=0$ or $i=1$. When $i=2$, it indicates that the general presentation that is symbolic and can be both. Thus, the total number of possible states for sub-matrix $\mathbf{[01]}^2_{n \times \rho}$ is $2^{\rho}$. Besides (\ref{^l_iC^k+1}) and (\ref{^l_i+1C^k+1}), the next $2^{k+1}-2$ bins of the form $(l,k+1)$ are made with the other $2^{k}-1$ bins of the form $(l,k)$.

In the next step, the recursion should be removed. The recursive model of generating bins by RASBA is as follows
\begin{equation} \label{removerec2}
\Bigg( \Big( ( \begin{bmatrix} {}^l_{i}\mathcal{B}^1 \\
\end{bmatrix}   \times {}^{l}\mathcal{Q}^2 + {}^{l}\mathcal{S}^2) \times {}^{l}\mathcal{Q}^3 + {}^{l}\mathcal{S}^3 \Big) \times {}^{l}\mathcal{Q}^4 + {}^{l}\mathcal{S}^4 \Bigg)...  
\end{equation}
where ${}^{l}\mathcal{Q}^k_{(n-1) \times n}$ and ${}^{l}\mathcal{S}^k_{2^l \times n}$ are matrices implied to attain the code table for the specific $l$ and $k$. We also deleted the right sub-scripts of matrices $\mathcal{Q}$ and $\mathcal{S}$ in (\ref{removerec2}) to minimize repetition. Furthermore, the $i$ which can be $1$ or $2$ in ${}^l_{i}\mathcal{B}^1$, indicates that the start point should be from the two bins of the $\mathcal{D}_{l,1}$. Since the procedure of generating bins is occurring in a specific case and every bin of the form $(l,k)$ results in two bins of the form $(l,k+1)$. Each bin of the form $(l,k+1)$ is directly obtained solely by one bin of the form $(l,k)$. As a result, (\ref{removerec2}) is an accurate algorithm. Let us start by considering the FF of the case $l=1$. Here the code table of $\mathcal{D}_{1,1}$, consists of the two known bins as follows:
\begin{equation} \label{C_111andC112}
{}^1_{1}\mathcal{B}^1=
\begin{bmatrix}
0 & 0\\
1 & 1\\
\end{bmatrix}, \quad \quad
{}^1_{2}\mathcal{B}^1= \begin{bmatrix}
0 & 1\\
1 & 0\\
\end{bmatrix}.
\end{equation}
By applying (\ref{removerec2}) to each of these bins, and following the steps, the result for $k=2$ would be
\begin{equation} \label{C_21i}
{}^{1}_{i}\mathcal{B}^{2}=
  \begin{bmatrix}
   {}^{1}_{j}\mathcal{B}^{1}
  \end{bmatrix} 
 \times   \begin{bmatrix}
    1 & 0 & 0\\
    0 & 1 & 0\\
  \end{bmatrix}  + \begin{bmatrix}  \begin{matrix}
  0 & 0\\
  0 & 0\\ 
    \end{matrix} \quad \begin{matrix}
    \mathbf{[01]}^2_{2 \times 1}\\
    \end{matrix}
    \end{bmatrix},
\end{equation}
where $\mathbf{[01]}^2_{2 \times 1}$ varies in $2$ states, and $j \in \{1,2\}$, so the permutations make all $4$ bins of the code table  (${}^{1}_{i}\mathcal{B}^{2}$ when $i\in \{1,2,3,4\}$).
For the case when $k=3$, there is
\begin{equation} \label{C_31i}
{}^{1}_{i}\mathcal{B}^{3}=
  \begin{bmatrix}
   {}^{1}_{j}\mathcal{B}^{1}
  \end{bmatrix} 
 \times   \begin{bmatrix}
    1 & 0 & 0 & 0\\
    0 & 1 & 0 & 0\\
  \end{bmatrix}  + \begin{bmatrix}  \begin{matrix}
  0 & 0 & 0\\
  0 & 0 & 0\\ 
    \end{matrix} \quad \begin{matrix}
    \mathbf{[01]}^2_{2 \times 2}\\
    \end{matrix}
    \end{bmatrix},
  \end{equation}
and again, since $\mathbf{[01]}^2_{2 \times 2}$ can produce $4$ states, when $j \in \{1,2\}$, the system can generate all $8$ bins of the code table.

For the $\mathcal{D}_{2,1}$, it starts from the two bins of the Ni code table
\begin{equation} \label{C_121andC122}
 {}^2_1\mathcal{B}^1=
\begin{bmatrix}
0 & 0 & 0\\
1 & 1 & 0\\
0 & 1 & 1\\
1 & 0 & 1\\
\end{bmatrix}, \quad \quad
{}^2_2\mathcal{B}^1= \begin{bmatrix}
0 & 0 & 1\\
1 & 1 & 1\\
0 & 1 & 0\\
1 & 0 & 0\\
\end{bmatrix},
\end{equation}
and all $4$ bins of the $\mathcal{D}_{2,2}$ can be achieved by
\begin{equation} \label{C_22i}
{}^2_i\mathcal{B}^2=
  \begin{bmatrix}
   {}^2_j\mathcal{B}^1
  \end{bmatrix} 
 \times   \begin{bmatrix}
    1 & 0 & 0 & 0\\
    0 & 1 & 0 & 0\\
    0 & 0 & 1 & 0\\
  \end{bmatrix}  + \begin{bmatrix}  \begin{matrix}
  0 & 0 & 0\\
  0 & 0 & 0\\ 
  0 & 0 & 0\\
  0 & 0 & 0\\ 
    \end{matrix} \quad \begin{matrix}
    \mathbf{[01]}^2_{4 \times 1}\\
    \end{matrix}
    \end{bmatrix},
  \end{equation}
with all permutations of $j \in \{1,2\}$ and $\mathbf{[01]}^2_{4 \times 1}$.
Also, all $8$ bins of the $\mathcal{D}_{2,3}$ can be attained as follows
\begin{equation} \label{C_32i}
 {}^2_i\mathcal{B}^3=
  \begin{bmatrix}
   {}^2_j\mathcal{B}^1
  \end{bmatrix} 
  \times   \begin{bmatrix}
    1 & 0 & 0 & 0 & 0\\
    0 & 1 & 0 & 0 & 0\\
    0 & 0 & 1 & 0 & 0\\
  \end{bmatrix}   + \begin{bmatrix}  \begin{matrix}
  0 & 0 & 0\\
   0 & 0 & 0\\ 
  0 & 0 & 0\\
  0 & 0 & 0\\ 
    \end{matrix} \quad \begin{matrix}
    \mathbf{[01]}^2_{4 \times 2}\\
    \end{matrix}
    \end{bmatrix},
  \end{equation}
with all permutations of $j=1,2$ and $\mathbf{[01]}^2_{4 \times 2}$. 

To sum up the procedure in a general way, we can propose this approach
\ifdefined\TwoCol
\begin{align*} 
{}^l_i\mathcal{B}^k=&
  \begin{bmatrix}
    {}^l_j\mathcal{B}^1\\
  \end{bmatrix} 
 \times   \begin{bmatrix}
    \mathbf{I}_{l+1} & \mathbf{0}_{(l+1) \times (k-1)}
  \end{bmatrix} \\
 & + \begin{bmatrix}  
  \mathbf{0}_{2^l \times (l+1)} & \mathbf{[01]}_{2^l \times (k-1)}\\
  \end{bmatrix}.   \numberthis \label{genrec}
\end{align*}
\fi
\ifdefined\OneCol
\begin{equation} \label{genrec}
{}^l_i\mathcal{B}^k=
  \begin{bmatrix}
    {}^l_j\mathcal{B}^1\\
  \end{bmatrix} 
 \times   \begin{bmatrix}
    \mathbf{I}_{l+1} & \mathbf{0}_{(l+1) \times (k-1)}
  \end{bmatrix} 
  + \begin{bmatrix}  
  \mathbf{0}_{2^l \times (l+1)} & \mathbf{[01]}_{2^l \times (k-1)}\\
  \end{bmatrix}.   
\end{equation}
\fi
Thus, this non-recursive general formula for the RASBA algorithm generates all bins of a form in a case from the two bins of the FF of that case.
\subsection{$\mathbf{AC+B}$ for RAHBA} \label{$AC+B$ for RAHBA}
Now, let us evaluate the standard method (the path of the red line in Fig. \ref{Schememap}) for the RAHBA algorithm, where the system is able to generate $\mathcal{D}$ of the FF of the next case from the $\mathcal{D}$ of the FF of the current case. There are always two bins in the FF of any cases. The two bins together make each of the bins of the $\mathcal{D}_{l+1,1}$, by switching all $0$ and $1$ which are appended to the right side of the codewords of each of the bins. Let us start the formulation of the process by generating the two bins of the next case as follows, respectively
  \begin{equation*}  \label{^l+1_1C^1}  
  {}^{l+1}_{1}\mathcal{B}^{1}=
  \begin{bmatrix}
    {}^{l}_{1}\mathcal{B}^1\\
    {}^{l}_{2}\mathcal{B}^1\\
  \end{bmatrix} 
 \times   \begin{bmatrix}
    \mathbf{I}_{(l+1)} & \mathbf{0}_{(l+1) \times 1}\\
  \end{bmatrix}  + \begin{bmatrix} \begin{matrix} \mathbf{0}_{(2^{l+1}) \times (l+1)} \end{matrix} \quad \begin{matrix}
  \mathbf{0}_{2^{l} \times 1} \\
  \mathbf{1}_{2^{l} \times 1} \\
\end{matrix}   
    \end{bmatrix},
  \end{equation*}
\begin{equation}  \label{^l+1_2C^1}
    {}^{l+1}_{2}\mathcal{B}^{1}=
  \begin{bmatrix}
    {}^{l}_{1}\mathcal{B}^1\\
    {}^{l}_{2}\mathcal{B}^1\\
  \end{bmatrix} 
 \times   \begin{bmatrix}
    \mathbf{I}_{(l+1)} & \mathbf{0}_{(l+1) \times 1}\\
  \end{bmatrix}  + \begin{bmatrix} \begin{matrix} \mathbf{0}_{(2^{l+1}) \times (l+1)} \end{matrix} \quad \begin{matrix}
  \mathbf{1}_{2^{l} \times 1} \\
  \mathbf{0}_{2^{l} \times 1} \\
\end{matrix}   
    \end{bmatrix}, 
  \end{equation}
when $k=1$, which means the bins belong to the FF. 

The objective now is to remove recursion on the RAHBA algorithm. The following is the recursive model of RAHBA to make the bins
\begin{equation} \label{removerec1}
\Bigg( \Big( ( \begin{bmatrix} {}^0_{1}\mathcal{B}^1 \\
{}^0_{2}\mathcal{B}^1\\
\end{bmatrix}   \times {}^{1}\mathcal{Q}^1 + {}^{1}\mathcal{S}^1) \times {}^{2}\mathcal{Q}^1 + {}^{2}\mathcal{S}^1 \Big) \times {}^{3}\mathcal{Q}^1 + {}^{3}\mathcal{S}^1 \Bigg)...   
\end{equation}

In order to attain the $\mathcal{D}$ of the FF of each case, the system should start from the $\mathcal{D}$ of the FF of the case $l=0$ which is $\mathcal{D}_{0,1}$. 
Thus, in (\ref{removerec1}), ${}^0_{1}\mathcal{B}^1$ and ${}^0_{2}\mathcal{B}^1$ must be the scalar values of $0$ and $1$, respectively, where each of them is the only codeword in a bin. Note that, the innermost parenthesis of (\ref{removerec1}) makes one of the two bins of the $\mathcal{D}_{1,1}$, and both of the bins should be concatenated vertically before being multiplied by the ${}^{2}\mathcal{Q}^1$. 
These two bins are as follows
\begin{equation} \label{C111andC112H}
{}^1_{1}\mathcal{B}^1=
  \begin{bmatrix}
    0\\
    1\\
  \end{bmatrix} 
 \times   \begin{bmatrix}
    1 & 0\\
  \end{bmatrix}  + \begin{bmatrix}  
  0 & 0\\
  0 & 1  
    \end{bmatrix}, \
{}^1_{2}\mathcal{B}^1=
  \begin{bmatrix}
    0\\
    1\\
  \end{bmatrix} 
 \times   \begin{bmatrix}
    1 & 0\\
  \end{bmatrix}  + \begin{bmatrix}  
  0 & 1\\
  0 & 0\\  
    \end{bmatrix}. 
\end{equation}
When $k=2$, the two next bins would be
\ifdefined\OneCol
 \begin{equation}  \label{C121andC122H}
{}^2_{1}\mathcal{B}^1=
  \begin{bmatrix}
    0\\
    1\\
    0\\
    1\\
  \end{bmatrix} 
 \times   \begin{bmatrix}
    1 & 0 & 0\\
  \end{bmatrix}  + \begin{bmatrix}  
  0 & 0 & 0\\
  0 & 1 & 0\\
  0 & 1 & 1\\
  0 & 0 & 1\\  
    \end{bmatrix}, \qquad
{}^2_{2}\mathcal{B}^1=
  \begin{bmatrix}
    0\\
    1\\
    0\\
    1\\
  \end{bmatrix} 
 \times   \begin{bmatrix}
    1 & 0 & 0\\
  \end{bmatrix}  + \begin{bmatrix}  
  0 & 0 & 1\\
  0 & 1 & 1\\
  0 & 1 & 0\\
  0 & 0 & 0\\  
    \end{bmatrix}, 
 \end{equation}
 \fi
\ifdefined\TwoCol
 \begin{equation*}  
{}^2_{1}\mathcal{B}^1=
  \begin{bmatrix}
    0\\
    1\\
    0\\
    1\\
  \end{bmatrix} 
 \times   \begin{bmatrix}
    1 & 0 & 0\\
  \end{bmatrix}  + \begin{bmatrix}  
  0 & 0 & 0\\
  0 & 1 & 0\\
  0 & 1 & 1\\
  0 & 0 & 1\\  
    \end{bmatrix},
    \end{equation*}
    \begin{equation} \label{C121andC122H}
{}^2_{2}\mathcal{B}^1=
  \begin{bmatrix}
    0\\
    1\\
    0\\
    1\\
  \end{bmatrix} 
 \times   \begin{bmatrix}
    1 & 0 & 0\\
  \end{bmatrix}  + \begin{bmatrix}  
  0 & 0 & 1\\
  0 & 1 & 1\\
  0 & 1 & 0\\
  0 & 0 & 0\\  
    \end{bmatrix}, 
 \end{equation}
 \fi
and the two bins of the $\mathcal{D}_{1,3}$ equal to
\ifdefined\OneCol
   \begin{equation} \label{C131andC132H}
{}^3_{1}\mathcal{B}^1=
  \begin{bmatrix}
    0\\
    1\\
    0\\
    1\\
    0\\
    1\\
    0\\
    1\\
  \end{bmatrix} 
 \times   \begin{bmatrix}
    1 & 0 & 0 & 0\\
  \end{bmatrix}  + \begin{bmatrix}  
  0 & 0 & 0 & 0\\
  0 & 1 & 0 & 0\\
  0 & 1 & 1 & 0\\
  0 & 0 & 1 & 0\\
  0 & 0 & 1 & 1\\
  0 & 1 & 1 & 1\\
  0 & 1 & 0 & 1\\
  0 & 0 & 0 & 1\\ 
    \end{bmatrix}, \quad
{}^3_{2}\mathcal{B}^1=
  \begin{bmatrix}
    0\\
    1\\
    0\\
    1\\
    0\\
    1\\
    0\\
    1\\
  \end{bmatrix} 
 \times   \begin{bmatrix}
    1 & 0 & 0 & 0\\
  \end{bmatrix}  + \begin{bmatrix}  
  0 & 0 & 0 & 1\\
  0 & 1 & 0 & 1\\
  0 & 1 & 1 & 1\\
  0 & 0 & 1 & 1\\
  0 & 0 & 1 & 0\\
  0 & 1 & 1 & 0\\
  0 & 1 & 0 & 0\\
  0 & 0 & 0 & 0\\ 
    \end{bmatrix}. 
   \end{equation}
\fi
\ifdefined\TwoCol
   \begin{equation*} 
{}^3_{1}\mathcal{B}^1=
  \begin{bmatrix}
    0\\
    1\\
    0\\
    1\\
    0\\
    1\\
    0\\
    1\\
  \end{bmatrix} 
 \times   \begin{bmatrix}
    1 & 0 & 0 & 0\\
  \end{bmatrix}  + \begin{bmatrix}  
  0 & 0 & 0 & 0\\
  0 & 1 & 0 & 0\\
  0 & 1 & 1 & 0\\
  0 & 0 & 1 & 0\\
  0 & 0 & 1 & 1\\
  0 & 1 & 1 & 1\\
  0 & 1 & 0 & 1\\
  0 & 0 & 0 & 1\\ 
    \end{bmatrix},
    \end{equation*}
    \begin{equation}   \label{C131andC132H}
{}^3_{2}\mathcal{B}^1=
  \begin{bmatrix}
    0\\
    1\\
    0\\
    1\\
    0\\
    1\\
    0\\
    1\\
  \end{bmatrix} 
 \times   \begin{bmatrix}
    1 & 0 & 0 & 0\\
  \end{bmatrix}  + \begin{bmatrix}  
  0 & 0 & 0 & 1\\
  0 & 1 & 0 & 1\\
  0 & 1 & 1 & 1\\
  0 & 0 & 1 & 1\\
  0 & 0 & 1 & 0\\
  0 & 1 & 1 & 0\\
  0 & 1 & 0 & 0\\
  0 & 0 & 0 & 0\\ 
    \end{bmatrix}. 
   \end{equation}
\fi
In a conclusion, the general form is as follows
  \begin{align}  \label{genhalfrec}
  {}^l_{1}\mathcal{B}^1=
  \begin{bmatrix}
    \mathbf{[01]}^0_{2^{l} \times 1}\\
  \end{bmatrix} 
 \times   \begin{bmatrix}
    1 & \mathbf{0}_{1 \times l}\\
  \end{bmatrix}  + \begin{bmatrix}  
  \mathbf{T}\\
    \end{bmatrix},
  \end{align}
where the $\mathbf{T}$ equals to
  \begin{equation} \label{T_Rahba}
\mathbf{T}=
  \begin{bmatrix}
    \mathbf{0}^{2^{l} \times 1} & \mathbf{Gr}^{l}_{rl}\\
  \end{bmatrix}, 
   \end{equation}
and the $\mathbf{Gr}^{l}_{rl}$ is the Gray code made by $l$ bits, which has $2^{l}$ rows and $l$ columns. The subscript $rl$ means the right-side left flipped.

The ${}^l_{2}\mathcal{B}^1$ is
\begin{align}    \label{genhalfrec2}
{}^l_{2}\mathcal{B}^1=
  \begin{bmatrix}
    \mathbf{[01]}^0_{2^{l} \times 1}\\
  \end{bmatrix} 
 \times   \begin{bmatrix}
    1 & \mathbf{0}_{1 \times l}\\
  \end{bmatrix}  + \begin{bmatrix}  
  \mathbf{T}_{u}\\
    \end{bmatrix}, 
\end{align}
where the $\mathbf{T}_{u}$ is the up-side down flipped version of $\mathbf{T}$.
As a result, in (\ref{genhalfrec}) and (\ref{genhalfrec2}), there are non-recursive general formulas to make the two bins of the $\mathcal{D}$ of the FF of any specific case.
\subsection{Combination} \label{Combination}
So far, we discovered the two non-recursive formulas for RASBA and RAHBA algorithms. Two formulas (\ref{genhalfrec}) and (\ref{genhalfrec2}) allow the system to generate the two bins of the $\mathcal{D}_{l,1}$ directly. Then, it is feasible to construct the bins of other values of $k$ in the current case, by using these two bins of the FF in the (\ref{genrec}). Now, the system can combine these two approaches by substituting (\ref{genhalfrec}) and (\ref{genhalfrec2}) in the ${}^1_j\mathcal{B}^l$ of equation (\ref{genrec}). By substituting (\ref{genhalfrec}) in (\ref{genrec}), it can get
\ifdefined\OneCol
\begin{equation}  \label{Final00}
{}^k_i\mathcal{B}^l=
  \begin{bmatrix}
    \begin{bmatrix}
    \mathbf{[01]}^0_{2^{l} \times 1}\\
  \end{bmatrix} 
 \times   \begin{bmatrix}
    1 & \mathbf{0}_{1 \times l}\\
  \end{bmatrix}  + \begin{bmatrix}  
  \mathbf{T}\\
    \end{bmatrix}\\
  \end{bmatrix} 
 \times  \begin{bmatrix}
    \mathbf{I}_{l+1} & \mathbf{0}_{(l+1) \times (k-1)}\\
  \end{bmatrix} + \begin{bmatrix}  
  \mathbf{0}_{2^l \times (l+1)} & \mathbf{[01]}_{2^l \times (k-1)}\\
  \end{bmatrix},
  \end{equation}
\fi
\ifdefined\TwoCol
\begin{align*} 
{}^k_i\mathcal{B}^l&=
  \begin{bmatrix}
    \begin{bmatrix}
    \mathbf{[01]}^0_{2^{l} \times 1}\\
  \end{bmatrix} 
 \times   \begin{bmatrix}
    1 & \mathbf{0}_{1 \times l}\\
  \end{bmatrix}  + \begin{bmatrix}  
  \mathbf{T}\\
    \end{bmatrix}\\
  \end{bmatrix} \\
 &\times  \begin{bmatrix}
    \mathbf{I}_{l+1} & \mathbf{0}_{(l+1) \times (k-1)}\\
  \end{bmatrix}
  + \begin{bmatrix}  
  \mathbf{0}_{2^l \times (l+1)} & \mathbf{[01]}_{2^l \times (k-1)}\\
  \end{bmatrix}, \numberthis \label{Final00}
  \end{align*} 
\fi
and after multiplication, the result is

\ifdefined\OneCol
  \begin{equation} \label{Final0}
  {}^k_i\mathcal{B}^l 
     = \begin{bmatrix}
  \begin{bmatrix}
    \begin{bmatrix}
    \mathbf{[01]}^0_{2^{l} \times 1}\\
  \end{bmatrix} 
 \times   \begin{bmatrix}
    1 & \mathbf{0}_{1 \times l}\\
  \end{bmatrix}  + \begin{bmatrix}  
  \mathbf{T}\\
    \end{bmatrix}\\
  \end{bmatrix}
 \ \mathbf{0}_{2^l \times (k-1)}\end{bmatrix} + \begin{bmatrix}  
  \mathbf{0}_{2^l \times (l+1)} & \mathbf{[01]}_{2^l \times (k-1)}\\
  \end{bmatrix},
  \end{equation}
  \fi
\ifdefined\TwoCol
  \begin{align*} 
  {}^k_i\mathcal{B}^l 
     &= \begin{bmatrix}
  \begin{bmatrix}
    \begin{bmatrix}
    \mathbf{[01]}^0_{2^{l} \times 1}\\
  \end{bmatrix} 
 \times   \begin{bmatrix}
    1 & \mathbf{0}_{1 \times l}\\
  \end{bmatrix}  + \begin{bmatrix}  
  \mathbf{T}\\
    \end{bmatrix}\\
  \end{bmatrix}
 \ \mathbf{0}_{2^l \times (k-1)}\end{bmatrix}\\
 &+ \begin{bmatrix}  
  \mathbf{0}_{2^l \times (l+1)} & \mathbf{[01]}_{2^l \times (k-1)}\\
  \end{bmatrix}, \numberthis \label{Final0}
  \end{align*}
\fi
and if we multiply it again, the result would be
  \begin{equation}    \label{final1}
  {}^k_i\mathcal{B}^l 
  = \begin{bmatrix} \
  \begin{bmatrix}
    \begin{bmatrix}
    \mathbf{[01]}^0_{2^{l} \times 1}\\
  \end{bmatrix} 
 \times   \begin{bmatrix}
    1 & \mathbf{0}_{1 \times l}\\
  \end{bmatrix}  + \begin{bmatrix}  
  \mathbf{T}\\
    \end{bmatrix}\\ 
  \end{bmatrix} \
  \mathbf{[01]}_{2^l \times (k-1)} \end{bmatrix}. 
  \end{equation} 
  
This makes half of the bins of the $\mathcal{D}_{l,k}$, and by substituting (\ref{genhalfrec2}) in (\ref{genrec}), the system can generate the other half of the bins of the $\mathcal{D}_{l,k}$, similar to the same process as follows
  \begin{align}   \label{final2}
  {}^k_i\mathcal{B}^l 
  &= \begin{bmatrix} \
  \begin{bmatrix}
    \begin{bmatrix}
    \mathbf{[01]}^0_{2^{l} \times 1}\\
  \end{bmatrix} 
 \times   \begin{bmatrix}
    1 & \mathbf{0}_{1 \times l}\\
  \end{bmatrix}  + \begin{bmatrix}  
  \mathbf{T}_{u}\\
    \end{bmatrix}\\ 
  \end{bmatrix} \
  \mathbf{[01]}_{2^l \times (k-1)} \end{bmatrix}, 
  \end{align}
where the only difference between (\ref{final1}) and (\ref{final2})  is the $\mathbf{T}$ which is replaced by $\mathbf{T}_{u}$.

Therefore, it is possible to construct the closed-form and non-recursive formulas of generating bins for RAHBA and RASBA algorithms, separately. Each of these can be considered as a part of the bins construction procedure of the Ni coding scheme. This closed-form and non-recursive formula has been introduced to generate the bins of every form of the suggested Ni code at once.
\section{Conclusion} \label{Conclusion}
This paper investigated the equivocation and coding techniques in the binary symmetric wiretap channel. We demonstrated how maximizing entropy for a single eavesdropper observation can generally lead to the maximization of the total equivocation by modeling the optimization as a linear programming (LP) problem. In the first phase of this work, we found a new limit that is tighter than the finite blocklength limit derived from Wyner's secrecy capacity, but with the application to the finite blocklength regime by solving this LP problem. In the second phase, we proposed a secure coding scheme that outperforms random binning codes and achieves the new limit for the case when the value of overhead bits is one ($l=1$), and almost achieves the limit for the cases when the value of overhead bits is more than one bit ($l=2,3,4,...$). Furthermore, we presented the two recursive algorithms RASBA and RAHBA, which are the foundation of the proposed Ni coding scheme. We explored the patterns to produce the generator and the parity-check matrices for the linear codes, and we presented the non-recursive variant of the RASBA and RAHBA algorithms, to obtain all Ni codes.
These proposed Ni coding techniques produce codes with a variety of values for overhead and message bits. Future contributions can investigate this idea of the coding problem as an optimization problem for other versions of wiretap channel models by defining the proper objective function and its constraints.
\section{Appendix A: Number of Possible Points}
This problem can be compared to a scenario where we need to fill $e$ spaces with $n+1$ different colored balls. There are no restrictions on how many balls of a single color we can choose, which means we are free to choose all or none balls from one color, and the balls with the same color are identical and totally equal. This problem can be solved by taking $\delta=\min\{e,n+1\}$ steps. In the $i$th step, we obligate ourselves to choose all $e$ balls, exactly from $i$ colors of balls. Therefore, there are $\binom{n+1}{i}$ options at every step. In the first sub-step of the $i$th step, we choose one ball from each of those $i$ colors. This guarantees that we choose at least one ball from those $i$ colors of balls to fill the whole $e$ spaces. In the second sub-step of this $i$th step, we wish to fill the remaining spaces ($e-i$ spaces), with those $i$ colors. However, in contrast to the previous sub-step, there is no obligation to choose at least one ball of each color, and similar to the first problem, we are free to choose all or none from one color. Thus, it is identical to the first problem, as was previously stated, and the problem has been reduced to an inner problem. It is similar to the first problem with $e-i$ spaces and $i$ colors of balls. Hence, the problem can be solved by addressing the same inner problems, which can be represented as follows: 
\begin{equation}  \label{appnumber}
N=\sum \limits_{i=1}^{\delta} \binom{n+1}{i}( \sum \limits_{j=1}^{e-i}\binom{i}{j}( \sum \limits_{k=1}^{e-i-j} \binom{j}{k}( \sum\limits_{m=1}^{e-i-j-k} \binom{k}{m}( ... )))),
\end{equation}
with the condition that when the upper limit of the $\sum$ is equal to $0$, the inner parenthesis containing that $\sum$ equals to $1$. However, this is the method by which we simulated and formed all the potential rows needed for the optimization. 

Another formula that produces the same result, which we derived by substituting the inner parenthesis of (\ref{appnumber}) with $\binom{e-1}{i-1}$, is as follows:
\begin{equation}   \label{appnumber2}
\sum \limits_{i=1}^{\delta} \binom{n+1}{i} \binom{e-1}{i-1}.  
\end{equation}
\bibliographystyle{IEEEtran}
\bibliography{IEEEabrv,references}  
\end{document}